\documentclass[letterpaper,times]{IONconf-v2}
\usepackage{amsmath}
\usepackage{amssymb}
\usepackage{amsthm}
\usepackage{bm}
\usepackage{tikz}
\usetikzlibrary{arrows.meta, calc, decorations.pathreplacing, fit, positioning}

\usepackage[hidelinks]{hyperref}

\theoremstyle{plain}
\newtheorem{theorem}{Theorem}
\newtheorem{lemma}{Lemma}
\newtheorem{proposition}{Proposition}
\newtheorem{corollary}{Corollary}
 
\theoremstyle{remark}

\newcommand{\ones}{\mathbf{1}}
\renewcommand{\vec}[1]{{\boldsymbol{\mathrm{#1}}}}

\articletype{Original paper}%

\received{TBD}
\revised{TBD}
\accepted{TBD}
\doi{10.33012/navi.XXX}
\journalname{NAVIGATION}
\journalvolume{TBD}
\journalnumber{TBD}

\title{Geometric Analysis of Doppler-Based Navigation with Low Earth Orbit Satellites}

\author[1]{Carlos Caravaca Gallego*}
\author[1]{Pini Gurfil}
\author[2]{Hector Rotstein}

\authormark{Caravaca Gallego \textit{et al.}}

\address[1]{The Stephen B.~Klein Faculty of Aerospace Engineering, Technion -- Israel Institute of Technology, Haifa, Israel}

\address[2]{Rafael Advanced Defense Systems / Department of Computer Science, Technion -- Israel Institute of Technology, Haifa, Israel}

\corres{*Carlos Caravaca Gallego, The Stephen B.~Klein Faculty of Aerospace Engineering, Technion -- Israel Institute of Technology, Haifa 3200003, Israel.\\ \email{carlosc@campus.technion.ac.il}}

\begin{document}

\abstract[Abstract]{The increasing vulnerability of Global Navigation Satellite Systems has motivated renewed interest in Doppler-based navigation using low Earth orbit satellites, which can determine position, velocity, clock bias, and clock drift from carrier Doppler measurements. However, the geometric dilution of precision (GDOP) in this eight-state problem behaves fundamentally differently from the GDOP in pseudorange-based navigation. In particular, volume-based satellite selection, effective for pseudorange-based GDOP minimization, has been empirically found to perform poorly for Doppler GDOP. This paper establishes a geometric foundation for Doppler GDOP characterization. A closed-form geometric parameterization of the Doppler measurement Jacobian in terms of elevation, azimuth, inclination, and altitude ratio is derived. It is shown that the clock bias sensitivity depends on elevation, altitude, and the angle between the satellite velocity vector and the line of sight. This sensitivity produces a correlation with the clock drift that no geometric arrangement can remove. A Schur complement decomposition of the eight-state information matrix is performed, yielding an exact GDOP inflation formula, governed by a collinearity coefficient that measures the alignment between the clock bias column and a seven-state subspace. It is proven that the geometric coupling that renders clock bias observable is the same coupling that inflates estimation covariance, and that satellite altitude diversity reduces the collinearity by separating satellites at equal elevation onto distinct sensitivity bands. An example employing a two-shell satellite configuration illustrates the analysis.}

\keywords{LEO satellite navigation, Doppler positioning, Geometric Dilution of Precision, Schur complement}

\maketitle



\vspace{-15pt}

\section{Introduction}\label{sec:intro}

The increasing vulnerability of Global Navigation Satellite Systems (GNSS) to radio-frequency interference \parencite{EASA2023} has motivated renewed interest in navigation using Low Earth Orbit (LEO) satellite signals \parencite{Iannucci2020, Reid2018}. Large constellations deployed primarily for broadband communications, including Starlink, OneWeb, and Iridium, can be exploited opportunistically for positioning and timing. Compared to medium Earth orbit GNSS satellites, LEO satellites are far closer to the receiver, offering a free-space path-loss advantage that, at comparable transmit power and antenna gain, can make the received signal up to three orders of magnitude stronger. The proliferation of mega-constellations ensures that tens to hundreds of satellites are simultaneously visible from any point on Earth, providing geometric diversity across multiple orbital planes and altitudes \parencite{Iannucci2020, Reid2018}.

Among the approaches to LEO-based positioning, Doppler-only navigation is distinguished by its ability to determine all eight navigation states, i.e. three position components, receiver clock bias, three velocity components, and clock drift, from carrier Doppler measurements alone, without requiring pseudorange observables. \textcite{Psiaki2021} established the analytical framework for this eight-state formulation, introducing the structure of the measurement-equation Jacobian and the scaling parameters that render the Geometric Dilution of Precision (GDOP) analysis meaningful for the Doppler-based navigation problem. The key enabler is the high angular rate of LEO satellites, which produces large and rapidly-changing Doppler shifts that carry rich geometric information. Simulation results have demonstrated that position errors of 1--5~meters are achievable with large LEO constellations, performance comparable to pseudorange-based GNSS \parencite{Psiaki2021}.

However, tracking and processing signals from every visible satellite may be computationally prohibitive, particularly for resource-constrained receivers. Therefore, the satellite selection problem, i.e. choosing a subset of visible satellites that keeps good navigation performance while reducing computational burden, becomes important for LEO Doppler navigation. Efficient selection requires understanding how satellite geometry affects GDOP. In pseudorange-based navigation, minimizing GDOP is well approximated by maximizing the determinant of the measurement Jacobian Gram matrix, which is proportional to the volume spanned by the satellite geometry vectors \parencite{BlancoDelgado2017}. This volume-based heuristic is effective because the pseudorange clock bias column, a vector of ones, is independent of satellite geometry. Whether analogous proxies exist for Doppler GDOP depends on the structure of the eight-state Doppler measurement Jacobian.

\textcite{Moore2024} investigated this problem using real orbital data of the OneWeb constellation, comparing several satellite selection metrics for Doppler GDOP minimization. The volume of the convex shape formed by the LOS unit vectors showed no correlation with Doppler GDOP (Spearman rank correlation $r = -0.18$), whereas determinant-based metrics operating on the full Doppler Jacobian achieved strong correlation ($r = -0.91$). The analysis by \textcite{Moore2024} attributes the failure of volume-based selection to velocity diversity, which the volume-based metric ignores. What that empirical work does not provide is the analytical form of the measurement Jacobian, specifically the clock bias column and its influence on Doppler GDOP; without these, one cannot determine whether alternative geometric proxies exist or whether the full Jacobian computation is unavoidable. The current paper provides the missing analysis. \textcite{Psiaki2021} already identified the geometric diversity of the LOS rate vectors as important to Doppler GDOP and showed, through constellation node-clustering experiments, that poor rate-vector diversity degrades performance by a large factor; the present contribution is complementary, isolating the clock bias column as the structural source of the volume-heuristic failure and characterizing it in closed form.

Recent studies approach LEO Doppler-navigation geometry by combining Doppler and pseudorange DOP \parencite{McLemoreDoP2022} or through position-only orbital-plane information conditioning for Doppler-only positioning \parencite{Thornton2026}, but do not isolate the Doppler-only clock bias column or its contribution to GDOP through the seven-state subspace, which is the objective of the present analysis. Another relevant work is that of \textcite{Baron2024}, which addresses the same eight-state Doppler problem. \textcite{Baron2024} derived closed-form analytical expressions for all elements of a high-fidelity Doppler Jacobian, including the clock bias column, together with the same scaling-parameter nondimensionalization and a family of dilution-of-precision metrics. The distinction here is one of variables and of aim; namely, \textcite{Baron2024} differentiated the measurement model with respect to the Earth-centered position and velocity vectors, in a form suited to numerical evaluation along a trajectory, whereas the present parameterization is expressed instead in the topocentric variables seen by the receiver. This renders the clock bias column an explicit function of these variables and brings it within reach of the collinearity analysis performed in this work. The closed-form geometric parameterization is, therefore, the enabling step; the main contributions are the structural results it yields, none of which appeared in prior work: the sign-definiteness of the clock-bias sensitivity, the exact Schur-complement inflation identity, and the altitude-diversity mechanism.

We first show that the clock bias sensitivity depends on elevation, altitude, and the angle between the satellite velocity and the LOS, bounded between an in-plane and a cross-track envelope. We then identify the clock-bias coupling structure for all visible satellites, and prove that the sensitivity is negative for every satellite above the horizon, so that the clock bias column is irreducibly correlated with the clock drift column; the coupling operates through two mechanisms, one linking clock bias to position sensitivity through the shared LOS rate, and the other to the velocity $z$-component through the shared elevation factor. In addition, we perform a Schur-complement decomposition of the eight-state information matrix that expresses GDOP inflation exactly in terms of the collinearity coefficient between the clock bias column and the seven-state subspace. We prove that perfect decorrelation is impossible for any constellation, establishing that the GDOP cost of estimating clock bias is structurally unavoidable.

Finally, we identify altitude diversity as the primary decorrelation lever and the satellite heading as a weaker one, and characterize the sensitivity-decorrelation tradeoff that limits the effectiveness of altitude-based heuristics. An example with a synthetic two-shell configuration, i.e. satellites distributed across two distinct orbital altitudes, illustrates the analysis and verifies the inflation formula.

\vspace{-10pt}

\section{Background}\label{sec:background}
\vspace{-5pt}

This section reviews the Doppler measurement model and introduces the quantities used throughout the paper. The eight-state Jacobian formulation for Doppler-only navigation is presented first, followed by a summary of the pseudorange GDOP framework that motivates the comparison with the Doppler case. The section concludes by examining the clock bias column of the Doppler Jacobian, whose geometry dependence distinguishes it from its pseudorange counterpart and sets the stage for the geometric analysis in subsequent sections.

\vspace{-10pt}

\subsection{Doppler Measurement Model Jacobian}\label{subsec:doppler_model}

\textcite{Psiaki2021} showed that carrier Doppler-shift measurements from eight or more simultaneously-visible LEO satellites are sufficient for determining all eight navigation unknowns, i.e. three position components, receiver clock bias, three velocity components, and receiver clock drift, without requiring pseudorange measurements. The simplified measurement model for the $j$th satellite is
\begin{equation}\label{eq:simplified_doppler}
    -\lambda D^j = (\hat{\vec\rho}^j)^T (\vec{v}^j - \vec{v}) + c\frac{d\delta_R}{dT_R} - c\frac{d\delta^j}{dT^j}
\end{equation}
where $D^j$ is the measured Doppler shift, $\lambda$ is the carrier wavelength, $\hat{\vec\rho}^j$ is the unit line-of-sight (LOS) vector from the receiver to the satellite, $\vec{v}$ and $\vec{v}^j$ are the respective receiver and satellite velocity vectors, $c$ is the speed of light, and $\delta_R$, $\delta^j$ are the receiver and satellite clock biases, respectively. Note that \textcite{Psiaki2021} defines the LOS unit vector from the satellite to the receiver; the expressions in this paper are equivalent, rewritten for the receiver-to-satellite convention. The receiver is modeled as stationary on a non-rotating Earth ($\vec{v} = \mathbf{0}$); receiver motion, which is slow relative to the satellite orbital speed (about 7.5~km/s) and includes the Earth-rotation velocity of an Earth-fixed receiver, shifts the relative velocity that enters the measurement at the few-percent level; the effect of this shift on the clock-bias structure is quantified in Section~\ref{sec:parameterization}, after the relevant geometric quantities are defined. 

The eight-state navigation vector to be estimated is $\vec{x} = [\vec{r}^T,\; c\,\delta_R,\; \vec{v}^T,\; c\,\dot{\delta}_R]^T$, where $\vec r$ denotes the position vector and the clock terms are scaled by the speed of light to obtain units of length and velocity, respectively. The measurement model in Eq.~\eqref{eq:simplified_doppler} is expressed in coordinate-free form: it involves only inner products and magnitudes of physical vectors and is, therefore, independent of the reference frame in which those vectors are resolved. For the numerical evaluation of the Jacobian, the position, velocity, and acceleration vectors are resolved in an Earth-Centered Inertial (ECI) frame. For the geometric analysis of Section~\ref{sec:parameterization}, where the natural variables are the satellite elevation and azimuth as seen by the receiver, the LOS quantities are resolved in a local East--North--Up (ENU) topocentric frame centered at the receiver. The transformation between the two frames is a rotation determined by the receiver latitude and longitude and does not affect the scalar Jacobian entries. Linearization about a nominal trajectory yields the $N \times 8$ Jacobian matrix \parencite{Psiaki2021}
\begin{equation}\label{eq:jacobian}
A_{\mathrm{GDOP}} =
\begin{bmatrix}
    A_p & A_b & A_v & A_d
\end{bmatrix} =
\begin{bmatrix}
(\dot{\hat{\vec\rho}}^1)^T / \gamma
  & [(\hat{\vec\rho}^1)^T \dot{\bar{\vec{v}}}^1
    + (\dot{\hat{\vec\rho}}^1)^T \bar{\vec{v}}^1] / \eta
  & (\hat{\vec\rho}^1)^T & 1 \\
\vdots & \vdots & \vdots & \vdots \\
(\dot{\hat{\vec\rho}}^N)^T / \gamma
  & [(\hat{\vec\rho}^N)^T \dot{\bar{\vec{v}}}^N
    + (\dot{\hat{\vec\rho}}^N)^T \bar{\vec{v}}^N] / \eta
  & (\hat{\vec\rho}^N)^T & 1
\end{bmatrix}
\end{equation}
Here, $A_p \in \mathbb{R}^{N\times 3}$, $A_b \in \mathbb{R}^{N}$, $A_v \in \mathbb{R}^{N \times 3}$, and $A_d \in \mathbb{R}^{N}$ denote the column blocks associated with position, clock bias, velocity, and clock drift, respectively, and $a_p^j$, $a_b^j$, $a_v^j$, $a_d^j$ denote the $j$th rows of the corresponding blocks. The column signs follow \textcite{Psiaki2021}, whose LOS convention is satellite-to-receiver; in the convention adopted here the partial derivatives of the measurement of Eq.~\eqref{eq:simplified_doppler} with respect to position and velocity are the negatives of the corresponding columns, whereas the clock-bias and clock-drift columns are convention-invariant. Reversing the sign of any column conjugates the Gram matrix by a diagonal sign matrix and leaves every dilution and collinearity quantity in this paper unchanged, so the signs of the source are retained. The barred quantities $\bar{\vec{v}}^j \equiv \vec{v} - \vec{v}^j$ and $\dot{\bar{\vec{v}}}^j$ are the relative velocity and acceleration of the receiver with respect to satellite $j$; for the stationary receiver, $\bar{\vec{v}}^j = -\vec{v}^j$ and $\dot{\bar{\vec{v}}}^j = -\dot{\vec{v}}^j$, where $\dot{\vec{v}}^j$ is the satellite acceleration vector. Notice that $A_d = \ones$ is a constant column of ones, and that we have used the scaling parameters:
%
\begin{equation}\label{eq:gamma_eta}
\gamma = \frac{1}{1 - R_E/a_{orb}} \sqrt{\frac{\mu}{a_{orb}^3}}, \qquad
\eta = \frac{R_E/a_{orb}}{1 - R_E/a_{orb}} \cdot \frac{\mu}{a_{orb}^2}
\end{equation}
where $\mu$ is Earth's gravitational parameter, $a_{orb}$ is the satellite orbital radius, and $R_E$ is Earth's radius. The satellite orbits are assumed circular throughout, so that the orbital radius equals the semimajor axis. Operational LEO constellations are near-circular, so that the orbital radius, speed, and acceleration that enter the Jacobian depart from their circular values by an amount of order the orbital eccentricity, which is negligible for the present geometric analysis. These parameters nondimensionalize the Jacobian matrix so that all column magnitudes are of order unity. Notice that as the orbital altitude decreases, both $\gamma$ and $\eta$ tend to infinity, reflecting the stronger Doppler signatures available from LEO satellites \parencite{Psiaki2021}. Figure~\ref{fig:doppler_geometry} illustrates the measurement geometry for two satellites at different orbital altitudes, showing the LOS direction $\hat{\vec\rho}^j$, satellite velocity $\vec{v}^j$, LOS rate vector $\dot{\hat{\vec\rho}}^j$, and the range rate $\dot{\rho}^j$ that enter the Jacobian columns.

\begin{figure}[htbp!]
\centering
\begin{minipage}[b]{0.65\textwidth}
\centering
\scalebox{0.50}{%
\begin{tikzpicture}[
    >=Stealth,
    sat/.style={fill=gray!70, draw=black, line width=0.4pt,
                minimum size=8pt, inner sep=0pt, rectangle},
    vec/.style={->, thick},
    losline/.style={dashed, gray!40, thin},
    ann/.style={font=\Large},
  ]
  \def\RE{3.2}
  \def\hL{1.8}
  \def\hH{3.4}
  \def\aL{42}
  \def\aH{38}
  \pgfmathsetmacro{\posL}{90-\aL}
  \pgfmathsetmacro{\posH}{90+\aH}
  \def\vLenL{2.2}
  \def\vLenH{1.4}
  \def\lrLenL{1.5}
  \def\lrLenH{0.8}
  \fill[blue!5]
    (15:\RE) arc[start angle=15, end angle=165, radius=\RE]
    -- cycle;
  \draw[thick, blue!30!black]
    (15:\RE) arc[start angle=15, end angle=165, radius=\RE];
  \coordinate (R) at (90:\RE);
  \fill[black] (R) circle (2.5pt);
  \node[below=2pt, font=\Large\bfseries] at (R) {Receiver};
  \coordinate (SL) at ({\posL}:{\RE+\hL});
  \coordinate (SH) at ({\posH}:{\RE+\hH});
  \draw[gray!30, thin, dashed]
    ({\posL-16}:{\RE+\hL})
    arc[start angle={\posL-16}, end angle={\posL+16},
        radius={\RE+\hL}];
  \draw[gray!30, thin, dashed]
    ({\posH-12}:{\RE+\hH})
    arc[start angle={\posH-12}, end angle={\posH+12},
        radius={\RE+\hH}];
  \draw[losline] (R) -- (SH);
  \node[sat] at (SH) {};
  \node[ann, above right=2pt] at (SH) {Sat\,$k$};
  \draw[vec, black, line width=1pt] (R) -- ($(R)!1.6cm!(SH)$)
    node[pos=0.5, above right=-4pt, ann] {$\hat{\vec\rho}^{\,k}$};
  \pgfmathsetmacro{\tanH}{\posH+90}
  \draw[vec, red!70!black, line width=1pt] (SH) -- ++({\tanH}:\vLenH)
    coordinate (vHend)
    node[ann, below left=1pt, red!70!black] {$\vec{v}^{\,k}$};
  \coordinate (lmH) at ($(R)!0.55!(SH)$);
  \draw[vec, blue!70!black, line width=1pt] (lmH) --
    ($(lmH)!\lrLenH cm!-90:(SH)$)
    node[pos=1.0, right=-2pt, ann, blue!70!black]
      {$\dot{\hat{\vec\rho}}^{\,k}$};
  \coordinate (pfH) at ($(R)!(vHend)!(SH)$);
  \draw[densely dotted, gray!50, thin] (vHend) -- (pfH);
  \draw[vec, orange!80!black, line width=1.2pt] (SH) -- (pfH)
    node[midway, below left=-1pt, ann, orange!80!black]
      {$\dot\rho^{\,k}$};
  \draw[losline] (R) -- (SL);
  \node[sat] at (SL) {};
  \node[ann, above right=2pt] at (SL) {Sat\,$j$};
  \draw[vec, black, line width=1pt] (R) -- ($(R)!1.6cm!(SL)$)
    node[pos=0.5, above right=-2pt, ann] {$\hat{\vec\rho}^{\,j}$};
  \pgfmathsetmacro{\tanL}{\posL+90}
  \draw[vec, red!70!black, line width=1pt] (SL) --
    ++({\tanL}:\vLenL) coordinate (vLend)
    node[ann, above right=1pt, red!70!black] {$\vec{v}^{\,j}$};
  \coordinate (lmL) at ($(R)!0.55!(SL)$);
  \draw[vec, blue!70!black, line width=1pt] (lmL) --
    ($(lmL)!\lrLenL cm!90:(SL)$)
    node[pos=1.0, right=-20pt, ann, blue!70!black]
      {$\dot{\hat{\vec\rho}}^{\,j}$};
  \coordinate (pfL) at ($(R)!(vLend)!(SL)$);
  \draw[densely dotted, gray!50, thin] (vLend) -- (pfL);
  \draw[vec, orange!80!black, line width=1.3pt] (SL) -- (pfL)
    node[midway, below=3pt, ann, orange!80!black]
      {$\dot\rho^{\,j}$};
  \pgfmathsetmacro{\sqsz}{0.15}
  \coordinate (sqA) at ($(pfL)!\sqsz cm!(R)$);
  \coordinate (sqB) at ($(pfL)!\sqsz cm!(vLend)$);
  \coordinate (sqC) at ($(sqA)+(sqB)-(pfL)$);
  \draw[thin, gray!60] (sqA) -- (sqC) -- (sqB);
  \pgfmathsetmacro{\hAL}{90 - \aL*0.55}
  \draw[thin, |<->|, gray!50!black]
    ({\hAL}:\RE) -- ({\hAL}:{\RE+\hL})
    node[near end, right=-1pt, ann, gray!40!black] {$h_{\mathrm{low}}$};
  \pgfmathsetmacro{\hAH}{90 + \aH*0.6}
  \draw[thin, |<->|, gray!50!black]
    ({\hAH}:\RE) -- ({\hAH}:{\RE+\hH})
    node[very near end, right=1pt, ann, gray!40!black] {$h_{\mathrm{high}}$};
  \node[anchor=north west, inner sep=5pt, font=\Large,
        fill=white, draw=gray!40, rounded corners=2pt]
    at (5.3, 4.9) {%
    \begin{tabular}{@{}r@{\;}l@{}}
      $\hat{\vec\rho}^{\,j}$
        & LOS unit vector \\[2pt]
      \textcolor{blue!70!black}{$\dot{\hat{\vec\rho}}^{\,j}$}
        & LOS rate ($\perp$ to LOS) \\[2pt]
      \textcolor{red!70!black}{$\vec{v}^{\,j}$}
        & satellite velocity \\[2pt]
      \textcolor{orange!80!black}{$\dot\rho^{\,j}$}
        & range rate $= \hat{\vec\rho}^{\,j\,T}\!\Delta\vec{v}$
    \end{tabular}};
\end{tikzpicture}%
}
\\[3pt]{\footnotesize (a) Measurement geometry (orbital plane)}
\end{minipage}\hfill
\begin{minipage}[b]{0.34\textwidth}
\centering
\scalebox{0.50}{%
\begin{tikzpicture}[>=Stealth, scale=1.5,
    ax/.style={->, thick},
    ann/.style={font=\Large}]
  \def\L{2.4}
  \pgfmathsetmacro{\chideg}{37}
  \fill[blue!4] (-0.35,-0.5) rectangle (\L+0.7,\L+0.5);
  \draw[gray!30] (-0.35,-0.5) rectangle (\L+0.7,\L+0.5);
  \draw[ax, red!70!black] (0,0) -- (\L,0)
    node[right, ann, red!70!black] {$\hat{\vec v}_{sat}^{\,j}$};
  \draw[ax, teal!60!black] (0,0) -- (0,\L)
    node[above, ann, teal!60!black] {$\hat{\vec h}^{\,j}$};
  \draw[ax, black, line width=1pt]
    (0,0) -- ({\L*cos(\chideg)},{\L*sin(\chideg)})
    node[above right=-1pt, ann] {$\hat{\vec\rho}_\perp^{\,j}$};
  \draw[->, blue!60!black, line width=0.8pt]
    (0.8,0) arc[start angle=0, end angle=\chideg, radius=0.8];
  \node[ann, blue!60!black]
    at ({1.05*cos(\chideg/2)},{1.05*sin(\chideg/2)}) {$\chi^j$};
  \fill (0,0) circle (1pt);
  \node[ann, below left=-2pt] at (0,0) {$\hat{\vec r}_{sat}^{\,j}\,\odot$};
\end{tikzpicture}%
}
\\[3pt]{\footnotesize (b) Aspect angle (plane $\perp$ radial)}
\end{minipage}
\caption{LEO Doppler navigation geometry. \textbf{(a)}~A ground receiver observes two LEO satellites at different altitudes, viewed in the orbital plane. The range rate $\dot{\rho}^j$ is the projection of the relative velocity onto the LOS direction $\hat{\vec\rho}^j$; the LOS rate vector $\dot{\hat{\vec\rho}}^j$ captures the angular motion of the satellite as seen from the receiver. \textbf{(b)}~The satellite velocity in general tilts out of the orbital plane; the aspect angle $\chi^j$ is shown in the plane perpendicular to the satellite radial $\hat{\vec r}_{sat}^{\,j}$ (a change of viewing plane, looking along the radial, $\odot$), measured between the along-track direction $\hat{\vec v}_{sat}^{\,j}$ and the projected LOS $\hat{\vec\rho}_\perp^{\,j}$.}
\label{fig:doppler_geometry}

\end{figure}

\vspace{-10pt}

\subsection{Pseudorange Geometric Dilution of Precision}\label{subsec:pr_gdop}

In pseudorange-based GNSS, the receiver estimates a four-state vector, three position components and receiver clock bias, from $N$ pseudorange measurements through the $N \times 4$ Jacobian matrix $A_{PR} = \begin{bmatrix} A_v & \ones \end{bmatrix}$ \parencite{Kaplan2006,HofmannWellenhof2008}, where $A_v$ is the same matrix of LOS direction vectors as in Eq.~\eqref{eq:jacobian}, entering there as velocity sensitivity and here as position sensitivity, and $\ones$ is the clock bias column. The Geometric Dilution of Precision, defined as $\mathrm{GDOP} = \sqrt{\mathrm{Tr}[(A_{PR}^T A_{PR})^{-1}]}$, can be expressed in terms of the Gram matrix determinant and adjugate as \parencite{BlancoDelgado2017}
\begin{equation}\label{eq:gdop_det}
\mathrm{GDOP} = \sqrt{\frac{\sum_{k=1}^{4}
  [\mathrm{adj}(A_{PR}^T A_{PR})]_{kk}}{\det(A_{PR}^T A_{PR})}}
\end{equation}
where $[\cdot]_{kk}$ denotes the $k$th diagonal entry and the sum runs over the four navigation states. Minimizing GDOP is commonly approximated by maximizing $\det(A_{PR}^T A_{PR})$, because the adjugate trace varies less strongly with geometry than the determinant \parencite{Parkinson1996,BlancoDelgado2017}. For $N = 4$ satellites, the square root of the determinant is proportional to the volume of the tetrahedron whose vertices are the tips of the four LOS unit vectors \parencite{MassattRudnick1990}, establishing the volume-based selection heuristics used for pseudorange GDOP minimization \parencite[see also][for multi-constellation extensions]{BlancoDelgado2010}. The proxy is itself imperfect even in the pseudorange case: \textcite{BlancoDelgado2017} show that the volume--GDOP relationship weakens as the satellite count grows and is tight only near a regular polytope. The point here is not that volume is an exact pseudorange proxy, but that whatever validity it has rests on a clock bias column that is independent of geometry -- a property the Doppler problem does not share.

\vspace{-10pt}

\subsection{The Clock Bias Column}\label{subsec:partition}
The clock bias element for satellite $j$ is
\begin{equation}\label{eq:coupling_column}
a_b^j = \frac{(\hat{\vec\rho}^j)^T \dot{\bar{\vec{v}}}^j + (\dot{\hat{\vec\rho}}^j)^T \bar{\vec{v}}^j}{\eta}
\end{equation}
When analyzed as a function of satellite geometry, this element is denoted $c^j \equiv a_b^j$; a closed-form expression for $c^j$ in terms of elevation, altitude, and the orientation of the LOS relative to the orbital plane is derived in Section~\ref{sec:parameterization}. Notice that, as opposed to pseudorange-based positioning, in this case the clock bias sensitivity depends on satellite velocity, acceleration, and the LOS geometry, which also determine the position and velocity Jacobian blocks. This geometry dependence gives rise to correlations between clock bias and the remaining navigation states, referred to throughout as the clock-bias coupling, in analogy with the clock--height coupling of pseudorange positioning.

The clock-bias element $c^j$ of Eq.~\eqref{eq:coupling_column} has a direct physical origin. The Doppler measurement depends on the satellite position and velocity, evaluated at a time set by the receiver clock; a receiver clock-bias error, therefore, offsets that evaluation time, and because the satellite is moving, the LOS geometry shifts. Consistent with this mechanism, the satellite state is evaluated at the signal transmit time inferred from the receiver's local clock; were the geometry instead evaluated at a true, clock-independent epoch, this sensitivity would vanish. This is why a single-epoch Doppler observation is sensitive to the receiver clock bias even though the measurement model of Eq.~\eqref{eq:simplified_doppler} contains only the clock drift explicitly: the first-order sensitivity is the clock bias column of Eq.~\eqref{eq:coupling_column}, built from the relative acceleration and the LOS rate, the analytical form of which was given by \textcite{Psiaki2021}. The two projections sum to the negative of the normalized range acceleration, $a_b^j = -\ddot{\rho}^j/\eta$, where $\ddot{\rho}^j$ is the second time derivative of the slant range $\rho^j$; this is the reading given by \textcite{Psiaki2021}, who described this column as the negatives of the range accelerations and constructed $\eta$ as an upper bound on their magnitude. This sensitivity scales with how fast the geometry is changing, that is, with $|\dot{\hat{\vec\rho}}^j|$, so satellites with faster LOS rotation are more sensitive to clock-bias errors. The position sensitivity is $\dot{\hat{\vec\rho}}^j/\gamma$, which scales with the same quantity; consequently a receiver position error $\Delta\vec{r}$, entering through $(\dot{\hat{\vec\rho}}^j)^T\Delta\vec{r}$, and a clock-bias error, entering through a term proportional to $|\dot{\hat{\vec\rho}}^j|^2$, are difficult to distinguish from a single Doppler measurement. This shared dependence on the LOS rate is the physical origin of the correlation between the clock bias column and the position columns. The analysis of this structural correlation is one of the subjects of the remainder of this paper.

\vspace{-10pt}

\section{Geometric Parameterization of the Doppler Jacobian}\label{sec:parameterization}

\vspace{-5pt}

This section derives closed-form expressions for all elements of the Doppler Jacobian in terms of satellite elevation, azimuth, inclination, and orbital altitude ratio. The key result is a closed form expression for the clock-bias sensitivity in terms of elevation, altitude, and the velocity--LOS aspect angle, together with the two envelopes that bound its values. A latitude-independence property 
is then established that makes the coupling structure universal across receiver locations.

\vspace{-10pt}

\subsection{Closed-Form Jacobian Elements}\label{subsec:closed_form}

Throughout this section the LOS geometry is resolved in an East--North--Up (ENU) topocentric frame at the receiver location: the $x$-axis points East, the $y$-axis points toward geographic North, and the $z$-axis points to the local zenith. The unit LOS vector toward satellite $j$ is parameterized by its elevation $\varepsilon^j$ (measured from the local horizontal plane, positive toward zenith) and azimuth $\alpha^j$ (measured clockwise from North),
\begin{equation}\label{eq:los_enu}
\hat{\vec\rho}^j =
\begin{bmatrix} \cos\varepsilon^j\sin\alpha^j \\ \cos\varepsilon^j\cos\alpha^j \\ \sin\varepsilon^j \end{bmatrix}_{\mathrm{ENU}},
\qquad z^j \equiv \sin\varepsilon^j = \hat{\vec\rho}_z^j .
\end{equation}
The altitude ratio is $\beta^j \equiv R_E/a_{orb}^j$.

All elements of the Jacobian in Eq.~\eqref{eq:jacobian} can be expressed in closed form as functions of the four geometric parameters ($\varepsilon^j,\alpha^j,i^j,\beta^j$), where $i^j$ is the orbital inclination of satellite $j$, thus extending the measurement model in \textcite{Psiaki2021} to enable direct geometric analysis \parencite[see also][for exact analytical differentiation of the high-fidelity Doppler Jacobian]{Baron2024}. The receiver, the Earth center, and the satellite form a triangle with sides $R_E$ (Earth center $O$ to receiver $R$), $a_{orb}^j$ ($O$ to satellite $S$), and $\rho^j$ ($R$ to $S$, the slant range), shown in Figure~\ref{fig:slant_triangle}. The interior angle of this triangle at the receiver, between the side $RO$ directed toward the Earth center and the LOS $RS$, equals the supplement of the zenith angle, $\tfrac{\pi}{2}+\varepsilon^j$ -- the zenith angle itself, between the upward vertical and the LOS, being $\tfrac{\pi}{2}-\varepsilon^j$ -- so that $\cos\!\left(\tfrac{\pi}{2}+\varepsilon^j\right) = -\sin\varepsilon^j = -z^j$. The Law of Cosines applied at the receiver gives
\vspace{-10pt}
\begin{figure}[htb]
\centering
\scalebox{0.6}{%
\begin{tikzpicture}[
    >=Stealth,
    sat/.style={fill=gray!70, draw=black, line width=0.4pt,
                minimum size=8pt, inner sep=0pt, rectangle},
    ann/.style={font=\LARGE},
  ]
  \def\RE{3.0}                          
  \def\hSat{4.2}                        
  \def\elev{28}                         
  \pgfmathsetmacro{\posS}{90-\elev-20}  

  \coordinate (O) at (0,0);             
  \coordinate (R) at (90:\RE);          
  \coordinate (S) at (\posS:{\RE+\hSat});

  \coordinate (Heast) at ($(R)+(2.4,0)$);
  \coordinate (Hwest) at ($(R)+(-1.3,0)$);

  \fill[blue!5] (42:\RE) arc[start angle=42, end angle=138, radius=\RE] -- (O) -- cycle;
  \draw[thick, blue!30!black] (42:\RE) arc[start angle=42, end angle=138, radius=\RE];

  \draw[dashed, gray!55] (Hwest) -- (Heast)
     node[pos=0.0, left=1pt, ann, gray!45!black] {horizon};
  \coordinate (Up) at (90:{\RE+2.0});
  \draw[->, dashed, gray!55!black] (R) -- (Up) node[right=2pt, ann] {local zenith (Up)};

  \draw[thick] (O) -- (R) node[midway, left, ann] {$R_E$};
  \draw[thick] (O) -- (S) node[pos=0.55, below right=-1pt, ann] {$a_{\mathrm{orb}}^j$};
  \draw[thick] (R) -- (S) node[pos=0.5, above, ann] {$\rho^j$};

  \pgfmathanglebetweenpoints{\pgfpointanchor{R}{center}}{\pgfpointanchor{S}{center}}
  \let\angLOS\pgfmathresult
  \draw[red!70!black, line width=0.7pt]
     ([shift=(0:1.25cm)]R) arc[start angle=0, end angle=\angLOS, radius=1.25cm];
  \pgfmathsetmacro{\elevmid}{\angLOS/2}
  \node[red!70!black, ann] at ([shift=(\elevmid:1.8cm)]R) {$\varepsilon^j$};
  \draw[black, line width=0.6pt]
     ([shift=(\angLOS:0.62cm)]R) arc[start angle=\angLOS, end angle=90, radius=0.62cm];
  \pgfmathsetmacro{\zenmid}{(\angLOS+90)/2}
  \node[ann, font=\LARGE] at ([shift=(\zenmid:1.45cm)]R) {$\tfrac{\pi}{2}-\varepsilon^j$};

  \fill[black] (O) circle (1.6pt);
  \node[ann, below=2pt] at (O) {Earth center $O$};
  \fill[black] (R) circle (2.2pt);
  \node[ann, above left=0pt, font=\LARGE\bfseries] at (R) {$R$};
  \node[sat] at (S) {};
  \node[ann, right=2pt] at (S) {Satellite $S$};

\end{tikzpicture}%
}\caption{Earth-center/receiver/satellite triangle defining the normalized slant range.}
\label{fig:slant_triangle}
\vspace{-20pt}
\end{figure}
\begin{equation}\label{eq:law_of_cosines}
(a_{orb}^j)^2 = R_E^2 + (\rho^j)^2 - 2 R_E\,\rho^j\cos\!\left(\tfrac{\pi}{2}+\varepsilon^j\right)
             = R_E^2 + (\rho^j)^2 + 2 R_E\,\rho^j z^j .
\end{equation}
Dividing by $(a_{orb}^j)^2$ and introducing the normalized slant range $\kappa^j \equiv \rho^j/a_{orb}^j$ yields the quadratic $(\kappa^j)^2 + 2\beta^j z^j\,\kappa^j + \big((\beta^j)^2 - 1\big) = 0$, and therefore the normalized slant range is
\begin{equation}\label{eq:kappa}
\kappa(z;\beta) = \sqrt{\beta^2 z^2 + (1 - \beta^2)} - \beta z .
\end{equation}
Dropping the explicit dependence of $\kappa$ on the local elevation $z$ and the altitude ratio $\beta$, re-write Eq.~\eqref{eq:kappa} as
\begin{equation}\label{eq:kappa_identity}
1 = \beta^2 + (\kappa)^2 + 2\beta z\,\kappa ,
\end{equation}
repeatedly used in what follows. Notice that for a satellite at zenith ($z=1$), $\kappa = 1-\beta$ and at the horizon ($z=0$), $\kappa = \sqrt{1-\beta^2}$.

Each row $a_v^j$ of the velocity block depends on $(\varepsilon^j, \alpha^j)$ alone through the corresponding LOS direction $\hat{\vec\rho}^j$ (Eq.~\eqref{eq:los_enu}), identical to the rows of the pseudorange geometry matrix. Because $\hat{\vec\rho}^j$ depends on the receiver-to-satellite direction and not on the satellite's orbit, the velocity block carries no information about inclination, pass direction, or altitude.

The rows $a_p^j$ of the position block involve the LOS rate vector $\dot{\hat{\vec\rho}}^j$, the time derivative of the unit LOS direction. Differentiating $\hat{\vec\rho}^j = \vec\rho^j/\rho^j$ with the quotient rule, and using $\dot\rho^j = (\hat{\vec\rho}^j)^T\dot{\vec\rho}^j$ for the range rate, the radial part of the motion cancels and only the component of the relative velocity transverse to the LOS survives:
\begin{equation}\label{eq:rho_dot_fundamental}
    \dot{\hat{\vec\rho}}^j
    = \frac{\dot{\vec\rho}^j}{\rho^j} - \frac{\dot\rho^j}{\rho^j}\hat{\vec\rho}^j
    = \frac{1}{\rho^j}\Big(\dot{\vec\rho}^j - \big((\hat{\vec\rho}^j)^T \dot{\vec\rho}^j\big)\hat{\vec\rho}^j\Big)
    = \frac{\vec{v}_{rel,\perp}^j}{\rho^j} .
\end{equation}
Here, $\vec\rho^j$ is the receiver-to-satellite relative position vector, $\rho^j = \|\vec\rho^j\|$ is the slant range (so that $\hat{\vec\rho}^j = \vec\rho^j/\rho^j$), $\dot{\vec\rho}^j$ is the relative velocity between satellite $j$ and the receiver, and $\vec{v}_{rel,\perp}^j$ denotes its component perpendicular to the LOS (cf. Fig.~\ref{fig:doppler_geometry}).  Geometrically, the LOS direction can only rotate; its rate of rotation is set by how fast the satellite moves across the sky, i.e. by the transverse velocity, while motion directly along the LOS changes the range but not the pointing direction. From the kinematic relation $\dot{\hat{\vec\rho}}^j = \vec v_{rel,\perp}^j/\rho^j$ (Eq.~\eqref{eq:rho_dot_fundamental}), the magnitude is $\|\dot{\hat{\vec\rho}}^j\| = v_{sat}^j\sin\alpha_v^j/\rho^j$, where $v_{sat}^j$ is the satellite orbital speed and $\alpha_v^j$ the angle between the satellite velocity and the LOS. With the circular-orbit relations $v_{sat}^j = \sqrt{\mu/a_{orb}^j}$, $\rho^j = a_{orb}^j\kappa^j$, and $n_{orb}^j = \sqrt{\mu/(a_{orb}^j)^3}$, this becomes $\|\dot{\hat{\vec\rho}}^j\| = n_{orb}^j\sin\alpha_v^j/\kappa^j$, with $\kappa^j$ given by Eq.~\eqref{eq:kappa}. The angle $\alpha_v^j$ is not fixed by elevation and altitude alone; it also depends on the orientation of the LOS relative to the orbital plane. This orientation is measured by the aspect angle $\chi^j\in[0,\pi/2]$, defined as the angle, in the plane perpendicular to the Earth-center-to-satellite radial, between the satellite velocity and the projection of the LOS onto that plane (Fig.~\ref{fig:doppler_geometry}(b)). The aspect angle is zero for a LOS in the orbital plane and $\pi/2$ for a LOS cross-track to the velocity. The LOS-rate magnitude, therefore, depends on $(z^j,\beta^j,\chi^j)$. Writing $\gamma^j = n_{orb}^j/(1-\beta^j)$ for the scaling parameter of Eq.~\eqref{eq:gamma_eta}, we define the \emph{normalized LOS-rate magnitude}
\begin{equation}\label{eq:S_scaling}
\tilde{S}(z;\beta,\chi) \;\equiv\; \frac{\|\dot{\hat{\vec\rho}}\|}{\gamma} \;=\; \frac{(1-\beta)\sin\alpha_v(z;\beta,\chi)}{\kappa(z;\beta)} .
\end{equation}
%
At zenith, the transverse velocity carries the full orbital speed independently of the heading, so $\tilde S=1$; for lower elevations, $\tilde S$ decreases and becomes heading-dependent, ranging within a band whose lower edge is the in-plane value $(1-\beta)(\beta z+\kappa)/\kappa$ and whose width is set by the aspect angle $\chi$. Per-measurement position information is therefore reduced for low-elevation satellites.

The direction of $\dot{\hat{\vec\rho}}^j$ within the plane perpendicular to $\hat{\vec\rho}^j$ gives the satellite's apparent motion across the sky. 
From Eq.~\eqref{eq:rho_dot_fundamental}, this direction is perpendicular to the LOS,
so the apparent heading is obtained by resolving $\dot{\hat{\vec\rho}}^j$ in a basis of the sky plane. The plane perpendicular to $\hat{\vec\rho}^j$ is spanned by an orthonormal pair: the in-sky cross-track unit vector $\vec{e}^j = (\hat{\vec n}_0\times\hat{\vec\rho}^j)/\|\hat{\vec n}_0\times\hat{\vec\rho}^j\|$, built from the local North direction $\hat{\vec n}_0 = [0,1,0]^T_{\mathrm{ENU}}$, and the in-sky along-track unit vector $\vec{n}^j = \hat{\vec\rho}^j\times\vec{e}^j$; the pair is well-defined except on the due-north or due-south horizon, where $\hat{\vec\rho}^j$ is parallel to $\hat{\vec n}_0$ and the heading is assigned by continuity. The apparent heading $\psi^j$, measured clockwise from North, is then defined directly as the angle of the projected relative velocity in this basis,
\begin{equation}\label{eq:psi_app}
\psi^j \equiv \operatorname{atan2}\!\big((\dot{\hat{\vec\rho}}^j)^T\vec{e}^j,\ (\dot{\hat{\vec\rho}}^j)^T\vec{n}^j\big),
\end{equation}
so that $\dot{\hat{\vec\rho}}^j/\|\dot{\hat{\vec\rho}}^j\| = \cos\psi^j\,\vec{n}^j + \sin\psi^j\,\vec{e}^j$. For a circular orbit Clairaut's relation \parencite{Geyer2016} supplies the ground-track heading $\psi_{gt}^j$, the direction of the satellite velocity at the sub-satellite point measured clockwise from North, $\sin\psi_{gt}^j = \cos i^j/\cos\phi_{sat}^j$, with $\cos\psi_{gt}^j = \sigma^j\sqrt{1-\sin^2\psi_{gt}^j}$ and $\sigma^j\in\{-1,+1\}$ the pass-direction indicator ($+1$ ascending, $-1$ descending). The satellite latitude $\phi_{sat}^j$ is obtained by projecting the satellite position onto the Earth rotation axis $\hat{\vec K}_{E} = [0,0,1]^T_{\mathrm{ECI}}$, an axis that coincides with the local North direction only for an equatorial receiver. The heading $\psi_{gt}^j$ fixes the satellite velocity vector, from which Eq.~\eqref{eq:psi_app} delivers $\psi^j$. The complete row of the eight-state Jacobian for satellite $j$ can then be written as:
\begin{equation}\label{eq:8state_row}
\begin{bmatrix}
\tilde{S}^j\left[\cos\psi^j\,\vec{n}^j + \sin\psi^j\,\vec{e}^j\right]^T &
c^j &
(\hat{\vec\rho}^j)^T &
1
\end{bmatrix}
\end{equation}
The position block is the product of a scalar magnitude $\tilde{S}^j$, the normalized LOS-rate magnitude of Eq.~\eqref{eq:S_scaling} that measures how fast the LOS sweeps across the sky, and a unit vector along the apparent motion across the sky. The magnitude depends on elevation, altitude, and, through the aspect angle $\chi$, the satellite heading; the direction depends on inclination and pass direction. The clock-drift entry is unity. The clock-bias entry $c^j$ is the critical structural element and is derived next; it is built from the same LOS rate that sets the position block and, therefore, shares the position block's dependence on the LOS-rate magnitude. This shared dependence is the origin of the structural collinearity analyzed in Sections~\ref{sec:coupling} and~\ref{sec:schur}.

The clock-bias element $c^j = a_b^j$ of Eq.~\eqref{eq:coupling_column} is the sum of two projections: the relative acceleration onto the LOS, and the relative velocity onto the LOS rate. We evaluate each in closed form. For a stationary receiver the relative velocity and acceleration are $\bar{\vec v}^j = -\vec v^j$ and $\dot{\bar{\vec v}}^j = -\dot{\vec v}^j$, respectively, and the satellite acceleration is centripetal, $\dot{\vec v}^j = -(\mu/(a_{orb}^j)^2)\,\hat{\vec r}_{sat}^j$, where $\hat{\vec r}_{sat}^j$ is the unit vector from the Earth center to the satellite.

\vspace{-10pt}

\subsubsection{First projection -- relative acceleration onto the LOS}
In ENU coordinates, the satellite position can be written as $\vec r_{sat}^j = R_E\,\hat{\vec z}_{\mathrm{ENU}} + \rho^j\hat{\vec\rho}^j$ so that:
\begin{equation}\label{eq:rho_dot_rsat}
\hat{\vec\rho}^j\cdot\vec r_{sat}^j = R_E z^j + \rho^j = a_{orb}^j\,(\beta^j z^j + \kappa^j)\Rightarrow
\hat{\vec\rho}^j\cdot\hat{\vec r}_{sat}^j = \beta^j z^j + \kappa^j .
\end{equation}
As a check, at zenith the satellite lies directly overhead, so the LOS and the Earth-center-to-satellite radial direction coincide; setting $z^j=1$ and $\kappa^j=1-\beta^j$ indeed gives $\hat{\vec\rho}^j\cdot\hat{\vec r}_{sat}^j = \beta^j + (1-\beta^j) = 1$. Substituting the centripetal acceleration $\dot{\bar{\vec v}}^j = -\dot{\vec v}^j = (\mu/(a_{orb}^j)^2)\,\hat{\vec r}_{sat}^j$ and normalizing by $\eta^j = (\beta^j/(1-\beta^j))\,\mu/(a_{orb}^j)^2$,
\begin{equation}\label{eq:term1}
\underbrace{\frac{(\hat{\vec\rho}^j)^T\dot{\bar{\vec v}}^j}{\eta^j}}_{\text{Term 1}}
= \frac{(\mu/(a_{orb}^j)^2)\,(\beta^j z^j + \kappa^j)}{(\beta^j/(1-\beta^j))\,\mu/(a_{orb}^j)^2}
= \frac{1-\beta^j}{\beta^j}\,(\beta^j z^j + \kappa^j).
\end{equation}
\vspace{-10pt}
\subsubsection{Second projection -- relative velocity onto the LOS rate} Using $\dot{\hat{\vec\rho}}^j = \vec v_{sat,\perp}^j/\rho^j$ (Eq.~\eqref{eq:rho_dot_fundamental}), where $\vec v_{sat,\perp}^j$ is the component of satellite velocity perpendicular to the LOS,
\begin{equation}\label{eq:term2_exact_proj}
(\dot{\hat{\vec\rho}}^j)^T\bar{\vec v}^j
= -(\dot{\hat{\vec\rho}}^j)^T\vec v^j
= -\frac{\|\vec v_{sat,\perp}^j\|^2}{\rho^j}
= -\frac{(v_{sat}^j)^2\sin^2\!\alpha_v^j}{\rho^j},
\end{equation}
where $\alpha_v^j$ is the angle between the satellite velocity and the LOS, so that $\|\vec v_{sat,\perp}^j\| = v_{sat}^j\sin\alpha_v^j$. The factor $\sin^2\!\alpha_v^j$ has a closed form once the orientation of the LOS relative to the orbital plane is resolved. Expand $\hat{\vec\rho}^j$ in the orbit-aligned orthonormal triad at the satellite -- the radial direction $\hat{\vec r}_{sat}^j$, the along-track direction $\hat{\vec v}_{sat}^j$, and the orbit normal $\hat{\vec h}^j = \hat{\vec r}_{sat}^j\times\hat{\vec v}_{sat}^j$. The radial coefficient is $\hat{\vec\rho}^j\cdot\hat{\vec r}_{sat}^j = \beta^j z^j+\kappa^j$ from Eq.~\eqref{eq:rho_dot_rsat}, and the identity~\eqref{eq:kappa_identity} gives $1-(\beta^j z^j+\kappa^j)^2 = (\beta^j)^2\cos^2\varepsilon^j$. Because $\hat{\vec\rho}^j$ has unit norm, the squared coefficients on the triad sum to one, so that $\cos^2\!\alpha_v^j = (\beta^j)^2\cos^2\varepsilon^j - (\hat{\vec\rho}^j\cdot\hat{\vec h}^j)^2$. The orbit-normal coefficient $\hat{\vec\rho}^j\cdot\hat{\vec h}^j$ is the out-of-orbital-plane component of the LOS; with the aspect angle $\chi^j$ defined above and $\|\hat{\vec\rho}_\perp^j\| = \beta^j\cos\varepsilon^j$, $|\hat{\vec\rho}^j\cdot\hat{\vec h}^j| = \beta^j\cos\varepsilon^j\sin\chi^j$ and $\cos\alpha_v^j = \beta^j\cos\varepsilon^j\cos\chi^j$. Equivalently, the aspect angle has the explicit form $\chi^j = \operatorname{atan2}\!\big(|\hat{\vec\rho}^j\cdot\hat{\vec h}^j|,\ |\hat{\vec\rho}^j\cdot\hat{\vec v}_{sat}^j|\big) \in [0,\pi/2]$, the out-of-plane tilt of the LOS read directly from its orbit-normal and along-track projections. Like the apparent heading $\psi^j$, the aspect angle is fixed by the orbital orientation through the ground-track heading $\psi_{gt}^j$. This gives
\begin{equation}\label{eq:sin2_alpha}
\sin^2\!\alpha_v^j = 1 - (\beta^j)^2\cos^2\varepsilon^j\cos^2\chi^j
= (\beta^j z^j + \kappa^j)^2 + (\beta^j)^2\cos^2\varepsilon^j\sin^2\chi^j .
\end{equation}
The second equality uses the identity~\eqref{eq:kappa_identity}, $(\beta^j z^j+\kappa^j)^2 = 1-(\beta^j)^2\cos^2\varepsilon^j$. With $(v_{sat}^j)^2 = \mu/a_{orb}^j$ and $\rho^j = a_{orb}^j\kappa^j$, Eq.~\eqref{eq:term2_exact_proj} becomes
\begin{equation}\label{eq:term2}
\underbrace{\frac{(\dot{\hat{\vec\rho}}^j)^T\bar{\vec v}^j}{\eta^j}}_{\text{Term 2}}
= -\frac{1-\beta^j}{\beta^j}\,\frac{\sin^2\!\alpha_v^j}{\kappa^j} .
\end{equation}
\subsubsection{Clock-bias sensitivity}
Adding Eqs.~\eqref{eq:term1} and~\eqref{eq:term2} and simplifying with identity~\eqref{eq:kappa_identity},
\begin{equation}\label{eq:coupling_analytical}
c(z;\beta,\chi) = -\,\frac{1-\beta}{\kappa}\Big[\,z\,(\beta z + \kappa) + \beta\cos^2\varepsilon\,\sin^2\chi\,\Big],
\qquad \cos^2\varepsilon = 1-z^2 .
\end{equation}
For a satellite above the horizon, $z$ is nonnegative and the normalized slant range $\kappa$ is positive. Because $0 < \beta < 1$, it follows that $c(z;\beta,\chi) \le 0$
for all elevation and aspect angles. The closed form is exact within the simplified GDOP model in Section~\ref{subsec:doppler_model}.
Relative to a full operational Doppler Jacobian evaluated on realistic ephemerides for an Earth-fixed receiver, neglecting the Earth-rotation velocity of the receiver (about 0.47~km/s, roughly six percent of the orbital speed) and the orbital eccentricity is a geometric simplification at the few-percent level, retained because it leaves the column structure under study intact. The sensitivity is linear in $\sin^2\chi$ with a nonpositive slope, so that at fixed $(z,\beta)$ its magnitude is bracketed by two envelopes, the in-plane floor ($\chi=0$) and the cross-track ceiling ($\chi=\pi/2$),
\begin{equation}\label{eq:coupling_envelopes}
c_{\mathrm{ip}}(z;\beta) = -\,\frac{(1-\beta)\,z\,(\beta z+\kappa)}{\kappa},
\qquad
c_{\mathrm{xt}}(z;\beta) = -\,\frac{(1-\beta)\,(z\kappa+\beta)}{\kappa},
\end{equation}
so that $|c_{\mathrm{ip}}(z;\beta)| \le |c(z;\beta,\chi)| \le |c_{\mathrm{xt}}(z;\beta)|$. At $\chi=0$, the LOS lies in the orbital plane, $\sin^2\!\alpha_v = (\beta z+\kappa)^2$, and the sensitivity reduces to the compact in-plane form
\begin{equation}\label{eq:c_eq_zStilde}
c_{\mathrm{ip}}(z;\beta) = -\,z\,\tilde S(z;\beta,0) , \qquad \tilde S(z;\beta,0) = \frac{(1-\beta)\,(\beta z+\kappa)}{\kappa} ,
\end{equation}
the negative elevation sine times the in-plane LOS-rate magnitude. More generally, the aspect-angle-dependent part of the sensitivity, $-\,(1-\beta)\beta\cos^2\varepsilon\,\sin^2\chi/\kappa$, is the projection of the relative velocity onto the LOS rate (Term~2, proportional to $\sin^2\!\alpha_v$) and, therefore, lies in the span of the position columns, whose magnitude is the same $\tilde S^j$.

At zenith 
the aspect-angle term vanishes and $c=-1$ at every altitude and aspect angle. At the horizon ($z=0$) only the aspect-angle term survives, $c=-(1-\beta)\beta\sin^2\chi/\kappa$, which vanishes only for a LOS that also lies in the orbital plane, so a cross-track horizon-grazing satellite retains a nonzero sensitivity. The per-satellite bound $|c(z;\beta,\chi)|\le 1$ holds at every aspect angle, with equality only at zenith, consistent with the definition of $\eta$ as the zenith range acceleration. The bound applies in each satellite's own normalization: when satellites at different altitudes are combined under a single common normalization, the rescaled entries can exceed unity, and the aggregate $d=\|\mathbf{c}\|^2$ can exceed the satellite count. The anatomy of the two projections that compose $c^j$, and which navigation states they couple to, is taken up in Section~\ref{subsec:coupling_structure}.

The term $c^j$ includes the elevation factor $z^j$, the $z$-component of the velocity columns in the Jacobian. At fixed altitude and heading, $c^j$ is a function of $z^j$; across a constellation the per-satellite sensitivity is confined to the band $[\,c_{\mathrm{ip}}(z;\beta),\,c_{\mathrm{xt}}(z;\beta)\,]$ between the two envelopes, and its heading-dependent part lies in the span of the position columns. Both channels -- the shared elevation factor with the velocity $z$-column and the shared LOS-rate magnitude with the position columns -- align the clock bias column with the seven-state subspace; their effect on the GDOP is developed in Sections~\ref{sec:coupling} and~\ref{sec:schur}. Figure~\ref{fig:jacobian_blocks} summarizes the dependence structure of the four Jacobian column groups, showing the clock bias column as a function of elevation, altitude, and aspect angle, and its coupling to the velocity $z$-component column.

\begin{figure}[htb]
\centering
\scalebox{0.7}{%
\begin{tikzpicture}[
    >=Stealth,
    ann/.style={font=\Large}, 
    colhead/.style={font=\large\bfseries, align=center}, 
    dep/.style={font=\Large, align=center, text=gray!60!black}, 
    brace/.style={decorate, decoration={brace, amplitude=5pt, raise=2pt}},
    bracebelow/.style={decorate, decoration={brace, amplitude=5pt, raise=2pt, mirror}},
  ]
 
  \def\cw{1.60}       
  \def\rh{0.65}      
  \def\nrows{6}      
  \def\gap{0.25}     
 
  \pgfmathsetmacro{\posX}{0}
  \pgfmathsetmacro{\posW}{3*\cw}
  \pgfmathsetmacro{\cbX}{\posW + \gap}
  \pgfmathsetmacro{\cbW}{\cw}
  \pgfmathsetmacro{\velX}{\cbX + \cbW + \gap}
  \pgfmathsetmacro{\velW}{3*\cw}
  \pgfmathsetmacro{\cdX}{\velX + \velW + \gap}
  \pgfmathsetmacro{\cdW}{\cw}
  \pgfmathsetmacro{\totW}{\cdX + \cdW}
  \pgfmathsetmacro{\totH}{\nrows*\rh}
 
  \draw[thick] (-.08, 0) -- (0, 0) -- (0, -\totH) -- (-.08, -\totH);
  \draw[thick] ({\totW+.08}, 0) -- (\totW, 0) -- (\totW, -\totH)
    -- ({\totW+.08}, -\totH);
 
  \fill[blue!8] (\posX, 0) rectangle ({\posX+\posW}, -\totH);
  \fill[red!12] (\cbX, 0) rectangle ({\cbX+\cbW}, -\totH);
  \fill[green!8] (\velX, 0) rectangle ({\velX+\velW}, -\totH);
  \fill[gray!10] (\cdX, 0) rectangle ({\cdX+\cdW}, -\totH);
 
  \foreach \i in {1,2} {
    \draw[gray!25, very thin]
      ({\posX+\i*\cw}, 0) -- ({\posX+\i*\cw}, -\totH);
  }
  \foreach \i in {1,2} {
    \draw[gray!25, very thin]
      ({\velX+\i*\cw}, 0) -- ({\velX+\i*\cw}, -\totH);
  }
 
  \draw[gray!50, thin]
    (\cbX, 0.02) -- (\cbX, {-\totH-0.02});
  \draw[gray!50, thin]
    ({\cbX+\cbW}, 0.02) -- ({\cbX+\cbW}, {-\totH-0.02});
  \draw[gray!50, thin]
    (\cdX, 0.02) -- (\cdX, {-\totH-0.02});
 
  \foreach \r [count=\ri from 0] in {1,2,3} {
    \pgfmathsetmacro{\ry}{-\ri*\rh - \rh/2}
    \node[ann, anchor=east] at (-0.25, \ry) {$j=\r$};
  }
  \pgfmathsetmacro{\vdotY}{-3*\rh - \rh/2}
  \node[ann] at (-0.95, \vdotY) {$\vdots$};
  \pgfmathsetmacro{\nrowY}{-4*\rh - \rh/2}
  \node[ann, anchor=east] at (-0.25, \nrowY) {$j\!=\!N$};
 
  \foreach \ri in {0,1,2,4} {
    \pgfmathsetmacro{\ry}{-\ri*\rh - \rh/2}
    \node[ann, gray!60!black] at ({\posX+0.5*\cw}, \ry)
      {\Large $\dot{\hat{\rho}}_x^j/\gamma$};
    \node[ann, gray!60!black] at ({\posX+1.5*\cw}, \ry)
      {\Large $\dot{\hat{\rho}}_y^j/\gamma$};
    \node[ann, gray!60!black] at ({\posX+2.5*\cw}, \ry)
      {\Large $\dot{\hat{\rho}}_z^j/\gamma$};
  }
  \node[ann] at ({\posX+0.5*\cw}, \vdotY) {$\vdots$};
  \node[ann] at ({\posX+1.5*\cw}, \vdotY) {$\vdots$};
  \node[ann] at ({\posX+2.5*\cw}, \vdotY) {$\vdots$};
 
  \foreach \ri in {0,1,2,4} {
    \pgfmathsetmacro{\ry}{-\ri*\rh - \rh/2}
    \node[ann, red!60!black] at ({\cbX+0.5*\cw}, \ry) {\Large $c^j$};
  }
  \node[ann] at ({\cbX+0.5*\cw}, \vdotY) {$\vdots$};
 
  \foreach \ri in {0,1,2,4} {
    \pgfmathsetmacro{\ry}{-\ri*\rh - \rh/2}
    \node[ann, green!40!black] at ({\velX+0.5*\cw}, \ry)
      {\Large $\hat{\rho}_x^j$};
    \node[ann, green!40!black] at ({\velX+1.5*\cw}, \ry)
      {\Large $\hat{\rho}_y^j$};
    \node[ann, green!40!black] at ({\velX+2.5*\cw}, \ry)
      {\Large $\hat{\rho}_z^j$};
  }
  \node[ann] at ({\velX+0.5*\cw}, \vdotY) {$\vdots$};
  \node[ann] at ({\velX+1.5*\cw}, \vdotY) {$\vdots$};
  \node[ann] at ({\velX+2.5*\cw}, \vdotY) {$\vdots$};
 
  \foreach \ri in {0,1,2,4} {
    \pgfmathsetmacro{\ry}{-\ri*\rh - \rh/2}
    \node[ann, gray!50!black] at ({\cdX+0.5*\cw}, \ry) {\Large $1$};
  }
  \node[ann] at ({\cdX+0.5*\cw}, \vdotY) {$\vdots$};
 
 
  \pgfmathsetmacro{\headY}{0.45}
 
  \draw[brace] ({\posX+\posW}, \headY-0.25) -- (\posX, \headY-0.25);
  \node[colhead] at ({\posX+\posW/2}, \headY+0.25)
    {Position};
  \node[dep] at ({\posX+\posW/2}, \headY+0.75)
    {$\varepsilon,\,\alpha,\,i,\,\beta$};
 
  \node[colhead, red!60!black] at ({\cbX+\cbW/2}, \headY+0.15)
    {Clock\\[-2pt]bias};
  \node[dep, red!60!black] at ({\cbX+\cbW/2}, \headY+0.9)
    {$z,\,\beta,\,\chi$};
 
  \draw[brace] ({\velX+\velW}, \headY-0.25) -- (\velX, \headY-0.25);
  \node[colhead, green!35!black] at ({\velX+\velW/2}, \headY+0.15)
    {Velocity};
  \node[dep, green!35!black] at ({\velX+\velW/2}, \headY+0.55)
    {$\varepsilon,\,\alpha$};
 
  \node[colhead, gray!50!black] at ({\cdX+\cdW/2}, \headY+0.15)
    {Clock\\[-2pt]drift};
  \node[dep, gray!50!black] at ({\cdX+\cdW/2}, \headY+0.85)
    {const.};
 
  \pgfmathsetmacro{\subY}{0.25}
  \node[font=\Large, green!35!black] at ({\velX+0.5*\cw}, \subY) {$x$};
  \node[font=\Large, green!35!black] at ({\velX+1.5*\cw}, \subY) {$y$};
  \node[font=\Large, green!35!black] at ({\velX+2.5*\cw}, \subY) {$z$};
 
  \node[font=\Large, blue!50!black] at ({\posX+0.5*\cw}, \subY) {$x$};
  \node[font=\Large, blue!50!black] at ({\posX+1.5*\cw}, \subY) {$y$};
  \node[font=\Large, blue!50!black] at ({\posX+2.5*\cw}, \subY) {$z$};
 
  \pgfmathsetmacro{\hlpad}{0.04}
  \draw[red!60!black, thick, rounded corners=2pt]
    ({\velX+2*\cw+\hlpad}, \hlpad)
    rectangle ({\velX+3*\cw-\hlpad}, {-\totH-\hlpad});
  \draw[red!60!black, thick, rounded corners=2pt]
    ({\cbX+\hlpad}, \hlpad)
    rectangle ({\cbX+\cbW-\hlpad}, {-\totH-\hlpad});
 
  \pgfmathsetmacro{\annY}{-\totH - 0.35}
  \pgfmathsetmacro{\annYtext}{-\totH - 0.9}
 
  \coordinate (cbBot) at ({\cbX+\cbW/2}, \annY);
  \coordinate (vzBot) at ({\velX+2.5*\cw}, \annY);
 
  \draw[bracebelow, red!60!black]
    ({\cbX}, {-\totH}) -- ({\cbX+\cbW}, {-\totH});
  \draw[bracebelow, red!60!black]
    ({\velX+2*\cw}, {-\totH}) -- ({\velX+3*\cw}, {-\totH});
 
  \pgfmathsetmacro{\ubot}{-\totH - 0.65}
  \draw[red!60!black, thick, densely dashed]
    (cbBot) -- ({\cbX+\cbW/2}, \ubot)
    -- ({\velX+2.5*\cw}, \ubot) -- (vzBot);
 
  \node[font=\Large, red!60!black, align=center]
    at ({(\cbX+\cbW/2 + \velX+2.5*\cw)/2}, {-\totH - 1.2})
    {shared $z = \sin\varepsilon$ dependence\\[-5pt]
     \normalsize(structural coupling)};
 
 
\end{tikzpicture}%
}\caption{Block structure of the eight-state Doppler Jacobian. The position columns depend on elevation, azimuth, inclination, and altitude; the velocity columns depend on elevation and azimuth; the clock bias column depends on elevation, altitude, and the velocity--LOS aspect angle; the clock drift column is constant. The shared elevation dependence of the clock bias and velocity $z$-components is the structural origin of the coupling analyzed in this paper.}
\label{fig:jacobian_blocks}
\vspace{-10pt}
\end{figure}

\vspace{-10pt}

\subsection{Geometric Invariances}\label{subsec:invariances}

The realized sensitivity $c^j$ depends on the aspect angle $\chi^j$ and so varies with the orbital orientation, but the band within which it must lie does not. The two envelopes that bound the sensitivity are functions of elevation and altitude alone, so the same band applies at every receiver location; only the value within it is constrained by the receiver latitude, through the admissible aspect angles. This separation is stated next.

\begin{proposition}[Latitude Independence of the Sensitivity Band]\label{prop:latitude_independence}
For every satellite above the horizon, the clock-bias sensitivity $c(z;\beta,\chi)$ of Eq.~\eqref{eq:coupling_analytical} satisfies $|c_{\mathrm{ip}}(z;\beta)| \le |c(z;\beta,\chi)| \le |c_{\mathrm{xt}}(z;\beta)|$, where the envelopes $c_{\mathrm{ip}}$ and $c_{\mathrm{xt}}$ of Eq.~\eqref{eq:coupling_envelopes} depend only on the local elevation $z = \sin\varepsilon$ and the altitude ratio $\beta = R_E/a_{orb}$. The band is therefore independent of the receiver latitude $\phi_{rec}$, the satellite azimuth $\alpha$, and the orbital inclination $i$.
\end{proposition}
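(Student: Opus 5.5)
The plan is to read the bracketing directly off the closed form of Eq.~\eqref{eq:coupling_analytical} and then argue that every quantity entering the envelopes is a function of $(z,\beta)$ only.

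\textbf{Step 1: affine dependence on $s \equiv \sin^2\chi$.} Write the sensitivity as
\begin{equation*}
c(z;\beta,\chi) = -\frac{1-\beta}{\kappa}\Big[z(\beta z+\kappa) + \beta(1-z^2)\,s\Big], \qquad s\in[0,1].
\end{equation*}
For a satellite above the horizon we have $z\ge 0$, and $\kappa>0$ from Eq.~\eqref{eq:kappa} since $0<\beta<1$. Both summands in the bracket are then nonnegative, so $|c|$ equals the bracket times $(1-\beta)/\kappa$. This quantity is affine and nondecreasing in $s$. Its values at $s=0$ and $s=1$ are $|c_{\mathrm{ip}}|$ and, after using $1-z^2$ and collecting $z\kappa+\beta$, $|c_{\mathrm{xt}}|$ of Eq.~\eqref{eq:coupling_envelopes}. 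The algebraic check is $z\beta z+z\kappa+\beta-\beta z^2 = z\kappa+\beta$. Monotonicity on $[0,1]$ then gives the two-sided inequality.

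\textbf{Step 2: the envelopes depend only on $(z,\beta)$.} The envelope formulas contain only $z$, $\beta$ and $\kappa(z;\beta)$, and Eq.~\eqref{eq:kappa} shows $\kappa$ is a function of $(z,\beta)$ alone. So $c_{\mathrm{ip}}$ and $c_{\mathrm{xt}}$ contain no $\phi_{rec}$, $\alpha$ or $i$.

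Latitude, azimuth and inclination can enter $c$ only through $\chi$, which the derivation shows is fixed by the orbital orientation via $\psi_{gt}$. That derivation also shows they cannot enter through $z$ or $\beta$. Indeed, the triangle relation Eq.~\eqref{eq:law_of_cosines} and the radial projection Eq.~\eqref{eq:rho_dot_rsat} involve only the elevation and the orbital radius. The triangle is the same for any receiver location because the Earth is spherical and the measurement model is coordinate-free.

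\textbf{Main obstacle.} The only genuinely delicate point is this last one: justifying that no hidden latitude dependence enters $\kappa$ or the radial coefficient $\beta z+\kappa$. I would handle it by noting that the ENU construction $\vec r_{sat}^j = R_E\hat{\vec z}_{\mathrm{ENU}}+\rho^j\hat{\vec\rho}^j$ holds at any receiver on the sphere, by rotational symmetry. The admissible range of $\chi$ may depend on $\phi_{rec}$ and $i$, but it is always contained in $[0,\pi/2]$, so the band itself is unaffected.
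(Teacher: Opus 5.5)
Your proof is correct and follows the paper's argument: the envelopes involve only $z$, $\beta$, and $\kappa(z;\beta)$, which is fixed by the Earth-center/receiver/satellite triangle. The bracketing then follows from the affine dependence of $c$ on $\sin^2\chi$, and your explicit sign check (both summands of the bracket nonnegative, so $|c|$ is nondecreasing in $s$) and the endpoint algebra $z(\beta z+\kappa)+\beta(1-z^2)=z\kappa+\beta$ just spell out what the paper compresses into its one-line appeal to the nonpositive slope.
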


\begin{proof}
The only quantities appearing in the envelopes of Eq.~\eqref{eq:coupling_envelopes} are $z$, $\beta$, and the normalized slant range $\kappa$, which by Eq.~\eqref{eq:kappa} is a function of $(z,\beta)$ alone, because the Earth-center/receiver/satellite triangle has the local elevation as its only geometric degree of freedom. Both envelopes are thus determined by $(z,\beta)$ alone, so the band they bound is independent of receiver latitude $\phi_{rec}$, satellite azimuth $\alpha$, and orbital inclination $i$. The bracketing $|c_{\mathrm{ip}}| \le |c| \le |c_{\mathrm{xt}}|$ holds because $c$ is linear in $\sin^2\chi$ with a nonpositive slope, as established in Section~\ref{subsec:closed_form}.
\end{proof}

Receiver latitude does not enter the envelopes, but it does constrain the aspect angle realized within the band. The inclinations that reach satellite latitude $\phi_{sat}^j$ satisfy $\min(i^j, \pi - i^j) \ge |\phi_{sat}^j|$ \parencite{Montenbruck2000}, which the receiver latitude limits through the sky positions it can produce; a near-polar shell viewed near the equator moves almost meridionally, $\chi^j \approx 0$, placing its satellites near the in-plane floor, whereas an inclined shell admits a wide range of $\chi^j$. What is universal across receiver locations is therefore the sensitivity band, not the value within it. The alignment of the clock bias column with the seven-state subspace is intrinsic to the Doppler measurement physics, not an artifact of receiver location, although its magnitude varies with the admissible aspect angles.

\vspace{-10pt}

\section{Clock-Bias Coupling and the Observability--Correlation Tradeoff}\label{sec:coupling}
This section establishes the structural correlation between the clock bias column and the remaining navigation states, and shows that part of it cannot be removed by any choice of geometry. We first separate the navigation problem into a seven-state formulation and the full eight-state formulation that adds clock bias; this separation is what allows the cost of estimating clock bias to be isolated, through the Schur complement developed in Section~\ref{sec:schur}. We then analyze the two terms that compose the clock-bias sensitivity, showing that it is negative for every satellite above the horizon and is, therefore, irreducibly correlated with the clock drift column. The section closes by characterizing which couplings are intrinsic to the measurement physics and which depend on the specific constellation geometry, distinguishing the correlations that drive GDOP inflation from those that can be arranged away.

\vspace{-10pt}

\subsection{The Seven-State and Eight-State Formulations}\label{subsec:formulations}

To analyze how the clock bias column couples to the other navigation states, we separate the Jacobian into a 7-state matrix -- position, velocity, and clock drift -- and the full 8-state matrix that restores the clock bias column. Define the 7-state Jacobian $H_7 = \begin{bmatrix} A_p & A_v & A_d \end{bmatrix} \in \mathbb{R}^{N \times 7}$, obtained by removing the clock bias block $A_b$ from Eq.~\eqref{eq:jacobian}, with columns ordered as position, velocity, and clock drift. The 8-state Jacobian appends the clock bias column in the final position, $H_8 = \begin{bmatrix} H_7 & \mathbf{c} \end{bmatrix} \in \mathbb{R}^{N \times 8}$, where $\mathbf{c} = A_b = [c^1, \ldots, c^N]^T$ is the vector of clock bias sensitivities from Eq.~\eqref{eq:coupling_analytical}. The matrix $H_8$ contains the same columns as $A_{\mathrm{GDOP}}$ in Eq.~\eqref{eq:jacobian} with the clock bias column reordered to the last position; the reordering leaves the GDOP unchanged and places clock bias as the bordered row and column of the Gram matrix below. The corresponding Gram matrices are $M_7 = H_7^T H_7$ and $M_8 = H_8^T H_8$, with covariance matrices $G_k = M_k^{-1}$ ($k = 7, 8$). The block structure of $M_8$ is
\begin{equation}\label{eq:M8_block}
M_8 = \begin{bmatrix} M_7 & \mathbf{b} \\ \mathbf{b}^T & d \end{bmatrix}
\end{equation}
where $\mathbf{b} = H_7^T \mathbf{c} \in \mathbb{R}^7$ contains the inner products between $\mathbf{c}$ and each column of $H_7$, and $d = \|\mathbf{c}\|^2 = \sum_{j=1}^N (c^j)^2$. The vector $\mathbf{b}$ partitions as $\mathbf{b} = [\mathbf{b}_p^T,\; \mathbf{b}_v^T,\; b_d]^T$, where $\mathbf{b}_p = A_p^T \mathbf{c} \in \mathbb{R}^3$, $\mathbf{b}_v = A_v^T \mathbf{c} \in \mathbb{R}^3$, and $b_d = A_d^T \mathbf{c} = \sum_{j=1}^N c^j$ quantify the coupling of the clock bias column to the position, velocity, and drift subspaces, respectively.

\vspace{-10pt}

\subsection{Structure of the Clock-Bias Coupling}\label{subsec:coupling_structure}

The sensitivity element $c^j$ from Eq.~\eqref{eq:coupling_column} comprises the two projections of Eqs.~\eqref{eq:term1} and~\eqref{eq:term2}; writing Term~2 through the normalized LOS-rate magnitude $\tilde S^j$ of Eq.~\eqref{eq:S_scaling}, they read
\begin{equation}\label{eq:term1_restate}
\underbrace{\frac{(\hat{\vec\rho}^j)^T \dot{\bar{\vec{v}}}^j}{\eta^j}}_{\text{Term 1}} = \frac{1-\beta^j}{\beta^j}\,(\beta^j z^j + \kappa^j),
\qquad
\underbrace{\frac{(\dot{\hat{\vec\rho}}^j)^T \bar{\vec{v}}^j}{\eta^j}}_{\text{Term 2}} = -\frac{\kappa^j}{\beta^j(1-\beta^j)}\,(\tilde S^j)^2,
\end{equation}
whose sum is the general sensitivity $c^j = c(z^j;\beta^j,\chi^j)$ of Eq.~\eqref{eq:coupling_analytical}, of which the in-plane reading $c_{\mathrm{ip}}^j = -z^j\tilde S(z^j;\beta^j,0)$ in Eq.~\eqref{eq:c_eq_zStilde} is the $\chi=0$ value. Term~1 is the projection of the relative acceleration onto the LOS direction and is aspect-angle-independent; Term~2 is the projection of the relative velocity onto the LOS-rate direction and carries the aspect angle through $\tilde S^j$. For $\beta = 0.891$ (780~km) at zenith, $|\text{Term 2}| = 1/\beta \approx 1.12$ and $|\text{Term 1}| = (1-\beta)/\beta \approx 0.12$, so Term~2 dominates by the factor $1/(1-\beta) \approx 9$; the aspect-angle term vanishes at zenith, so these values hold at every aspect angle. Toward the horizon the two terms cancel only when the LOS lies in the orbital plane; off-plane, $|\text{Term 2}|$ exceeds $|\text{Term 1}|$ and the sensitivity persists. The behavior of Term~2 has a kinematic origin: by Eq.~\eqref{eq:rho_dot_fundamental}, $\dot{\hat{\vec\rho}}^j = \vec{v}_{\mathrm{sat},\perp}^j/\rho^j$ for a stationary receiver, so that $(\dot{\hat{\vec\rho}}^j)^T \bar{\vec{v}}^j = -(\dot{\hat{\vec\rho}}^j)^T \vec{v}_{\mathrm{sat}}^j = -\rho^j\|\dot{\hat{\vec\rho}}^j\|^2$, giving $\text{Term 2} = -\rho^j\|\dot{\hat{\vec\rho}}^j\|^2/\eta^j$. Because $\|\dot{\hat{\vec\rho}}^j\| = \gamma^j\tilde S^j$, Term~2 grows with $(\tilde S^j)^2$, the squared magnitude of the position block. The clock-bias sensitivity is, therefore, tied to the LOS-rate magnitude that also sets the position sensitivity; satellites with larger position sensitivity carry larger $|c^j|$, creating a structural correlation between the clock bias column and the position columns.

Separately, the sensitivity $c^j = c(z^j;\beta^j,\chi^j)$ of Eq.~\eqref{eq:coupling_analytical} carries the elevation factor $z^j$, the third (Up) component of the LOS unit vector (Eq.~\eqref{eq:los_enu}). Because the velocity block $A_v$ has rows $(\hat{\vec\rho}^j)^T$ (Eq.~\eqref{eq:8state_row}), its third column -- the velocity $z$-column -- is the stack of these third LOS components across the visible satellites, so its $j$th entry is exactly $\hat{\vec\rho}_z^j = z^j$. The clock bias column and the velocity $z$-column, therefore, share the same elevation factor $z^j$: at fixed altitude and aspect angle this is a functional relationship, and across a constellation it confines the two columns to the band between the envelopes of Eq.~\eqref{eq:coupling_envelopes}. The aspect-angle-independent Term~1 enters only through $(z,\beta)$ and has no azimuthal dependence, so any correlation between $c^j$ and the horizontal velocity components arises only from the specific constellation geometry, not from the measurement physics.

Both couplings just described act on top of a more basic property: the sign of $c^j$ does not change with geometry. Because $c^j = -\ddot{\rho}^j/\eta^j$, the bound $|c^j| \le 1$ in each satellite's own normalization is inherent in the construction of $\eta^j$ as an upper bound on the range acceleration \parencite{Psiaki2021}; what the closed form adds is the exact statement: strict negativity for every satellite above the horizon, equivalently the strict positivity of the range acceleration, attainment of the bound only at zenith, and the dependence on the aspect angle between the two envelopes. The next lemma collects these properties, which, combined with Term~2 above, force the clock bias column to align with the seven-state subspace and underlie the irreducible drift correlation established below.

\begin{lemma}[Sign and Bound of Clock-Bias Sensitivity]\label{lem:c_negative}
For all LEO altitudes ($\beta \in (0,1)$) and all aspect angles $\chi$, the clock-bias sensitivity of Eq.~\eqref{eq:coupling_analytical} satisfies $c(z;\beta,\chi) < 0$ for every satellite above the horizon ($z \in (0,1]$), and $c(0;\beta,\chi) = -(1-\beta)\beta\sin^2\chi/\kappa \le 0$ at the horizon, vanishing only for a LOS in the orbital plane ($\chi = 0$). At zenith, $c(1;\beta,\chi) = -1$ at every aspect angle. In particular $|c(z;\beta,\chi)| \le 1$, with equality only at zenith.
\end{lemma}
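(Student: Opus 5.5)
Starting from the closed form of Eq.~\eqref{eq:coupling_analytical}, I would write $c(z;\beta,\chi) = -\frac{1-\beta}{\kappa}\,Q$ with $Q = z(\beta z+\kappa) + \beta(1-z^2)\sin^2\chi$. The prefactor $(1-\beta)/\kappa$ is strictly positive for $\beta\in(0,1)$ and $z\in[0,1]$, since Eq.~\eqref{eq:kappa} gives $\kappa \ge \sqrt{1-\beta^2}-\beta z$. That bound is positive: it equals $1-\beta$ at $z=1$, and $\kappa$ is decreasing in $z$. Positivity also follows directly from $\kappa$ being a positive root of the quadratic derived from Eq.~\eqref{eq:law_of_cosines}. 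With the prefactor settled, the sign question reduces to the sign of $Q$.

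For the sign of $Q$, I would split into the same two cases as the statement.

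\emph{Case $z\in(0,1]$.} The first summand $z(\beta z+\kappa)$ is strictly positive, and the second is nonnegative, so $Q>0$ and hence $c<0$.

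\emph{Case $z=0$.} Only the second summand survives. Here $\kappa=\sqrt{1-\beta^2}$ and $Q=\beta\sin^2\chi$, which gives $c(0;\beta,\chi) = -(1-\beta)\beta\sin^2\chi/\kappa \le 0$. This vanishes exactly when $\sin\chi=0$, that is, $\chi=0$ on $[0,\pi/2]$.

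\emph{Zenith.} At $z=1$ we have $\cos^2\varepsilon=0$ and $\kappa=1-\beta$, so $Q=\beta+1-\beta=1$ and $c=-1$ regardless of $\chi$.

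For the bound $|c|\le1$ with equality only at zenith, I would use the envelope bracketing already established: $|c|\le|c_{\mathrm{xt}}(z;\beta)|=(1-\beta)(z\kappa+\beta)/\kappa$. It then suffices to show $(1-\beta)(z\kappa+\beta)\le\kappa$, with equality only at $z=1$. Equivalently, I want $f(z)\equiv\kappa(1-(1-\beta)z)-\beta(1-\beta)\ge 0$. At $z=1$ this gives $f(1)=(1-\beta)\beta-\beta(1-\beta)=0$.

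This inequality is the main obstacle. My first attempt would be monotonicity: show $f$ is strictly decreasing on $[0,1]$. Both $\kappa$ and $1-(1-\beta)z$ are positive and decreasing in $z$, with $\kappa'=\beta(\beta z/\sqrt{\cdot}-1)<0$ because $\beta z<\sqrt{\beta^2z^2+1-\beta^2}$. Their product is therefore strictly decreasing, which gives $f(z)>f(1)=0$ for $z<1$.

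Two loose ends remain.

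\emph{The zenith case of the strict inequality.} The envelope only gives $\le$, so for strictness I would note that at $z<1$ the strict inequality $|c_{\mathrm{xt}}|<1$ already suffices. At $z=1$ we have $|c|=1$ exactly.

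\emph{The physical cross-check.} As an alternative, I would use $c=-\ddot\rho/\eta$ together with Psiaki's construction of $\eta$ as the zenith range acceleration. The analytic monotonicity argument is the one I would commit to.
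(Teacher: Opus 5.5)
Your proof is correct and follows the paper's argument step for step: the sign comes from the nonnegative bracket, then the horizon and zenith evaluations, then the bound $|c|\le|c_{\mathrm{xt}}|\le 1$ through the strictly decreasing product $\kappa\,[1-(1-\beta)z]$, which reaches $\beta(1-\beta)$ only at $z=1$. One slip needs fixing: the lower bound $\sqrt{1-\beta^2}-\beta z$ is not positive for LEO values (at $z=1$ it equals $\sqrt{1-\beta^2}-\beta<0$ whenever $\beta>1/\sqrt{2}$, and it is $\kappa$ itself, not this bound, that equals $1-\beta$ there), so justify $\kappa>0$ either by your positive-root argument or by $\kappa$ being decreasing with $\kappa(1)=1-\beta>0$.
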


\begin{proof}
In Eq.~\eqref{eq:coupling_analytical}, for $\beta \in (0,1)$ the factor $1-\beta > 0$ and $\kappa > 0$; the bracket is a sum of nonnegative terms, with $z(\beta z+\kappa) > 0$ for $z > 0$ because $\beta z+\kappa > 0$. Hence $c < 0$ for $z \in (0,1]$. At $z = 0$ the first term vanishes and $\cos^2\varepsilon = 1$, giving $c = -(1-\beta)\beta\sin^2\chi/\kappa$, which is zero only when $\chi = 0$.

The magnitude is largest at $\chi = \pi/2$, where $c$ equals the cross-track envelope $c_{\mathrm{xt}}(z;\beta) = -(1-\beta)(z\kappa+\beta)/\kappa$ of Eq.~\eqref{eq:coupling_envelopes}, so it suffices to bound $|c_{\mathrm{xt}}|$. The inequality $|c_{\mathrm{xt}}| \le 1$ is equivalent to $(1-\beta)(z\kappa+\beta) \le \kappa$, that is,
\begin{equation}\label{eq:c_bound_reduction}
\kappa\bigl[\,1 - (1-\beta)z\,\bigr] \;\ge\; \beta(1-\beta) .
\end{equation}
Both factors on the left are positive and decreasing in $z$ on $[0,1]$: the bracket is affine in $z$ with value $1$ at $z = 0$ and $\beta$ at $z = 1$, and $\mathrm{d}\kappa/\mathrm{d}z = \beta\bigl(\beta z/\sqrt{\beta^2 z^2 + 1 - \beta^2} - 1\bigr) < 0$ because $\beta z < \sqrt{\beta^2 z^2 + 1 - \beta^2}$. Their product is therefore decreasing in $z$, with minimum at $z = 1$, where $\kappa = 1-\beta$ and the left-hand side equals $(1-\beta)\beta$. Hence \eqref{eq:c_bound_reduction} holds on $z \in [0,1]$, with equality only at $z = 1$. Because $|c(z;\beta,\chi)| \le |c_{\mathrm{xt}}(z;\beta)|$, it follows that $|c| \le 1$, with equality only at zenith, where $\kappa = 1-\beta$ and $c = -1$.
\end{proof}

At zenith, the sensitivity is $c = -1$ at every aspect angle, the most negative value attainable; this is independent of altitude because $\eta$ is the zenith range acceleration. At the horizon the sensitivity runs across the band $[\,c_{\mathrm{xt}}(0;\beta),\,0\,]$ as the aspect angle varies from cross-track to in-plane, so an in-plane horizon-grazing satellite produces no first-order clock-bias signature in the Doppler measurement while a cross-track one retains a finite sensitivity. This floor is also where the stationary, non-rotating-Earth model matters: receiver motion at the percent level of the orbital speed can lift a near-floor entry to zero or marginally above, so the strict per-satellite sign holds within the stated model, whereas the aggregate results below depend only on the negativity of the sum, which a near-zero entry does not affect. Every above-horizon satellite contributes $c^j \in [-1,0)$, so for $N$ satellites the total clock-bias information $d = \|\mathbf{c}\|^2 = \sum_j (c^j)^2$ is at most $N$, attained when all satellites are at zenith. Because $c^j < 0$ for every above-horizon satellite, the drift inner product is nonzero whenever at least one satellite is strictly above the horizon, as stated next.

\begin{corollary}[Irreducible Clock-Drift Correlation]\label{cor:drift_correlation}
For any constellation containing at least one satellite strictly above the horizon, $b_d = \sum_{j=1}^N c^j < 0$. The clock bias column is always negatively correlated with the clock-drift column, independently of the geometric arrangement of visible satellites.
\end{corollary}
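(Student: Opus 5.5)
The corollary follows almost directly from Lemma~\ref{lem:c_negative}; the plan is to split the drift inner product $b_d = A_d^T\mathbf{c} = \mathbf{1}^T\mathbf{c} = \sum_{j=1}^N c^j$ according to whether each visible satellite lies strictly above the horizon or on it. First, recall from Section~\ref{subsec:formulations} that $A_d = \ones$. Hence the drift entry of $\mathbf{b} = H_7^T\mathbf{c}$ is simply the sum of the clock-bias sensitivities, with no geometric weighting. The sign of $b_d$ is therefore decided entirely by the signs of the individual $c^j$.

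Second, classify the visible satellites, all of which satisfy $z^j \in [0,1]$. For a satellite with $z^j \in (0,1]$, Lemma~\ref{lem:c_negative} gives $c^j < 0$ strictly, for every altitude ratio $\beta^j\in(0,1)$ and every aspect angle $\chi^j$. For a horizon satellite with $z^j = 0$, the same lemma gives $c^j = -(1-\beta^j)\beta^j\sin^2\chi^j/\kappa^j \le 0$. Every term of the sum is thus nonpositive. The hypothesis supplies at least one index with $z^j>0$, which contributes a strictly negative term. A sum of nonpositive reals containing at least one strictly negative term is strictly negative, so $b_d<0$.

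Third, interpret this as a correlation statement. The cosine of the angle between $\mathbf{c}$ and the drift column is $b_d/(\|\mathbf{c}\|\sqrt{N})$. It is well defined because $\|\mathbf{c}\|^2 = d \ge (c^j)^2 > 0$ for the above-horizon satellite. It is therefore strictly negative, and no arrangement of azimuths, inclinations, or altitudes can make it vanish. The reason is that the argument used only the sign information of the lemma, never the particular values of $(\alpha^j, i^j, \beta^j, \chi^j)$.

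I expect no real obstacle; the only point needing care is the horizon satellites. Their entries may be exactly zero (in-plane, $\chi=0$), so strictness must come from the assumed above-horizon satellite and not from every term. Within the stated model those zero entries cannot flip the sign. I would also note, as the paper does after the lemma, that near-zero horizon entries perturbed by receiver motion leave the strict inequality intact.
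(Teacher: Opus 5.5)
Your proposal is correct and follows essentially the same route as the paper, which gives no separate proof and obtains the corollary directly from Lemma~\ref{lem:c_negative}: with $A_d=\mathbf{1}$, $b_d$ is the plain sum of the $c^j$, every term is nonpositive, and any satellite with $z^j>0$ contributes a strictly negative term. Your explicit treatment of horizon satellites, whose entries may vanish when $\chi=0$, and your reading of the correlation as the cosine $b_d/(\|\mathbf{c}\|\sqrt{N})$ fill in the details the paper leaves implicit.
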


The vector $\mathbf{b} = H_7^T\mathbf{c} = [\mathbf{b}_p^T,\; \mathbf{b}_v^T,\; b_d]^T$ has components with distinct structural properties. The drift coupling $b_d = \sum c^j < 0$ is irreducible (Corollary~\ref{cor:drift_correlation}). The velocity $z$-component coupling $b_{v,z} = \sum z^j c^j$ is sign-definite: every term is nonpositive, because $z^j \ge 0$ and $c^j \le 0$ by Lemma~\ref{lem:c_negative}, and the sum is strictly negative for any constellation with a satellite above the horizon, so it cannot be arranged away. The position coupling $\mathbf{b}_p = A_p^T\mathbf{c}$ is structurally nonzero because Term~2 of each $c^j$ grows with $(\tilde S^j)^2$, the squared magnitude of the position columns, aligning $\mathbf{c}$ with those columns. By contrast, the horizontal velocity couplings $b_{v,x} = \sum x^j c^j$ and $b_{v,y} = \sum y^j c^j$, where $x^j$ and $y^j$ denote the horizontal components of $(\hat{\vec\rho}^j)^T$, can vanish for suitably symmetric constellations, because $c^j$ carries no direct azimuthal dependence. The irreducible couplings -- drift and the vertical velocity component -- are the ones that drive the collinearity between the clock bias column and the seven-state subspace toward unity at fixed altitude.

\vspace{-10pt}

\subsection{The Observability--Correlation Tradeoff}\label{subsec:obs_corr}

The results above reveal a fundamental tradeoff. Clock bias is observable from Doppler measurements only through its effect on the computed range rate, and by Section~\ref{subsec:coupling_structure} the dominant sensitivity term satisfies $\text{Term 2} \propto \|\dot{\hat{\vec\rho}}\|^2$, the same LOS-rate quantity that sets the position sensitivity. This shared dependence is not a design flaw but the mechanism by which clock bias becomes observable: without it, a clock-bias error would produce no Doppler signature. The same coupling, however, forces $\mathbf{c}$ to project onto $\mathrm{col}(H_7)$, so the covariance is always inflated, by an amount set by geometry and quantified exactly in the next section. The effect has a pseudorange analogue: there the clock bias column of ones correlates with the vertical LOS component, coupling clock bias to altitude error, whereas here $c^j$ correlates with the vertical LOS-\emph{rate} component, shifting that coupling from the position-altitude domain to the velocity domain.

\vspace{-10pt}

\section{Schur Complement Bounds}\label{sec:schur}


This section quantifies exactly how much adding the clock bias column inflates the estimation covariance. Using the block structure of the eight-state Gram matrix, we apply the Schur complement to obtain a closed-form expression for the GDOP increase in terms of two quantities: the collinearity coefficient, which measures how much of the clock bias column lies in the span of the seven-state subspace, and the total clock bias sensitivity. The added covariance is nonnegative by the extra-state property; we further show that perfect decorrelation is impossible for any constellation, and that the inflation admits a geometric interpretation as an orthogonal projection. These bounds set up the altitude-diversity mechanism of Section~\ref{sec:altitude}, which acts directly on the collinearity coefficient.

\vspace{-10pt}

\subsection{Schur Complement and Covariance Inflation}\label{subsec:schur_complement}

The block structure of the Gram matrix in Eq.~\eqref{eq:M8_block} admits exact inversion via the Schur complement \parencite{BarShalom2001}. Throughout this section the design matrices are taken to be of full column rank, $\operatorname{rank}(H_7) = 7$ and $\operatorname{rank}(H_8) = 8$, which requires at least $N \geq 8$ visible satellites and renders the Gram matrices $M_7$ and $M_8$ invertible and the eight-state GDOP finite; the selection problem of interest, drawing a subset from the tens to hundreds of visible LEO satellites, is comfortably overdetermined. Define $S \equiv d - \mathbf{b}^T G_7 \mathbf{b}$. Under this assumption $S > 0$, and the covariance matrix $G_8 = M_8^{-1}$ (cf.~Section \ref{subsec:formulations}) takes the form
\begin{equation}\label{eq:G8_schur}
G_8 = \begin{bmatrix} G_7 + \frac{1}{S} G_7 \mathbf{b}\mathbf{b}^T G_7 & -\frac{1}{S} G_7 \mathbf{b} \\[0.5em] -\frac{1}{S} \mathbf{b}^T G_7 & \frac{1}{S} \end{bmatrix}
\end{equation}
The Schur complement of a partitioned information matrix is the classical route to isolating the effect of an added parameter on the covariance of the others; block partitioning of the dilution-of-precision problem has a multi-constellation precedent in the closed-form GDOP formula of \textcite{TengWang2016}, and what is specific here is its application to a geometry-dependent added column, the clock bias column. The Schur complement used here is the block-elimination operation on the information matrix, distinct from the Schur-convexity of the determinant invoked in the volume-heuristic literature of Section~\ref{subsec:pr_gdop}; the two share a name but refer to different objects. The GDOP of the $k$-state problem is
\begin{equation}\label{eq:gdop_def}
\mathrm{GDOP}_k = \sqrt{\mathrm{Tr}\,G_k}, \qquad k = 7, 8,
\end{equation}
the square root of the trace of the corresponding covariance matrix. Let $G_8^{(7)}$ denote the leading $7\times7$ principal submatrix of $G_8$, the covariance of the seven original states in the presence of the clock bias column. Comparing $G_8^{(7)}$ with $G_7$ yields the covariance inflation $G_8^{(7)} = G_7 + \Delta G$, where the inflation matrix $\Delta G = \frac{1}{S} G_7 \mathbf{b}\mathbf{b}^T G_7$. The structure of $\Delta G$ determines whether estimating clock bias can ever improve the precision of the other seven states. The next theorem shows it cannot: the inflation term is positive semidefinite, so adding the clock bias column raises the variance of every original state or leaves it unchanged, never lowering it. This monotonicity is the extra-state property that estimating an additional parameter cannot reduce the covariance of the others, established for the least-squares problem by \textcite{vandiggelen2009agps} and applied to GNSS dilution of precision by \textcite{BlancoDelgado2010}; the contribution here is the exact closed form of the inflation matrix $\Delta G$ and its decomposition by state group.
\begin{theorem}[Extra-State Monotonicity]\label{thm:covariance_inflation}
Adding the clock bias column does not decrease the covariance of any original state: $G_8^{(7)} \succeq G_7$, where $\succeq$ denotes the positive semidefinite ordering.
\end{theorem}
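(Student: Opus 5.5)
The plan is to read the result directly off the block inverse in Eq.~\eqref{eq:G8_schur}, whose leading block already gives $G_8^{(7)} = G_7 + \Delta G$ with $\Delta G = \frac{1}{S} G_7 \mathbf{b}\mathbf{b}^T G_7$. It is therefore enough to show two facts: the scalar $S$ is strictly positive, and the matrix $G_7\mathbf{b}\mathbf{b}^T G_7$ is positive semidefinite. The ordering $G_8^{(7)} \succeq G_7$ then follows by definition.

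First I will establish $S>0$, using the full-rank assumption stated at the start of Section~\ref{subsec:schur_complement}. Because $\operatorname{rank}(H_7)=7$, the matrix $M_7$ is invertible, and $P \equiv H_7 G_7 H_7^T$ is the orthogonal projector onto $\mathrm{col}(H_7)$. Substituting $\mathbf{b}=H_7^T\mathbf{c}$ and $d=\mathbf{c}^T\mathbf{c}$ gives
\begin{equation*}
S = \mathbf{c}^T(I-P)\mathbf{c} = \|(I-P)\mathbf{c}\|^2 \ge 0 .
\end{equation*}
Equality would require $\mathbf{c}\in\mathrm{col}(H_7)$, which contradicts $\operatorname{rank}(H_8)=8$. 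So $S>0$. As a cross-check, the same conclusion follows from the determinant identity $\det M_8 = \det M_7\cdot S$, since both determinants are positive. The projector form is also the geometric interpretation the section promises later.

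Second, I will show the inflation term is positive semidefinite. $G_7$ is symmetric, so $G_7\mathbf{b}\mathbf{b}^T G_7 = \mathbf{w}\mathbf{w}^T$ with $\mathbf{w}=G_7\mathbf{b}$. For any $\mathbf{x}\in\mathbb{R}^7$,
\begin{equation*}
\mathbf{x}^T\Delta G\,\mathbf{x} = \frac{(\mathbf{w}^T\mathbf{x})^2}{S} \ge 0 .
\end{equation*}
Hence $\Delta G\succeq 0$. It is a rank-one matrix, and it is zero exactly when $\mathbf{b}=\mathbf{0}$. As a consequence, every diagonal entry of $G_8^{(7)}$, which is a per-state variance, is at least the corresponding entry of $G_7$, and $\mathrm{Tr}\,G_8^{(7)}\ge \mathrm{Tr}\,G_7$.

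The only step needing care is the strict positivity of $S$, which is what makes the division in Eq.~\eqref{eq:G8_schur} legitimate. I will handle it with the projector argument above rather than simply citing the Schur inversion formula, because it ties $S$ to the residual of $\mathbf{c}$ off $\mathrm{col}(H_7)$. I may also note, though it is not needed for the theorem, that Corollary~\ref{cor:drift_correlation} gives $b_d<0$. Thus $\mathbf{b}\neq\mathbf{0}$ and $\Delta G$ is nonzero, so the inflation is never trivial.
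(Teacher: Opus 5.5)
Your proposal is correct and follows the paper's proof: read $\Delta G = \frac{1}{S}G_7\mathbf{b}\mathbf{b}^TG_7$ off the block inverse, then show $\mathbf{x}^T\Delta G\,\mathbf{x} = (\mathbf{b}^TG_7\mathbf{x})^2/S \ge 0$. The only difference is that you justify $S>0$ explicitly via $S = \|(I-P)\mathbf{c}\|^2$ and $\operatorname{rank}(H_8)=8$, whereas the paper takes $S>0$ directly from the full-rank assumption and proves the projector identity separately afterward as Theorem~\ref{thm:S_projection}.
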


\begin{proof}
The inflation matrix $\Delta G = \frac{1}{S} G_7 \mathbf{b}\mathbf{b}^T G_7$ is a rank-1 outer product scaled by $1/S > 0$. For any $\mathbf{x} \in \mathbb{R}^7$, $\mathbf{x}^T \Delta G\,\mathbf{x} = (\mathbf{b}^T G_7 \mathbf{x})^2/S \geq 0$, so $\Delta G \succeq 0$. The inequality is strict in any direction not orthogonal to $G_7\mathbf{b}$.
\end{proof}

The total GDOP inflation follows from the trace:
\begin{equation}\label{eq:gdop_increase}
\mathrm{GDOP}_8^2 - \mathrm{GDOP}_7^2 = \frac{1}{S}\left(\|\boldsymbol{\xi}\|^2 + 1\right) > 0,
\qquad \boldsymbol{\xi} \equiv G_7\mathbf{b},
\end{equation}
where $\boldsymbol{\xi} = G_7\mathbf{b}$ is the coefficient vector that expresses the projection of $\mathbf{c}$ onto $\mathrm{col}(H_7)$ in the column basis, and the $+1$ accounts for the clock bias variance itself (the $(8,8)$ entry of $G_8$). Thus, $\mathrm{GDOP}_8 > \mathrm{GDOP}_7$ whenever $H_8$ has full column rank.

\vspace{-10pt}

\subsection{Geometric Interpretation and the Collinearity Coefficient}\label{subsec:collinearity}

The Schur complement admits a geometric interpretation that clarifies its role.

\begin{theorem}[Projection Interpretation]\label{thm:S_projection}
The Schur complement equals the squared norm of the component of $\mathbf{c}$ orthogonal to the column space of $H_7$, $S = \|P_{H_7^\perp}\,\mathbf{c}\|^2$, where $P_{H_7^\perp} = I - H_7(H_7^TH_7)^{-1}H_7^T$ is the orthogonal projector onto $\mathrm{col}(H_7)^\perp$.
\end{theorem}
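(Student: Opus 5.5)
The plan is a direct algebraic substitution. The definitions from Section~\ref{subsec:formulations} give $S = d - \mathbf{b}^T G_7 \mathbf{b}$, with $d = \mathbf{c}^T\mathbf{c}$, $\mathbf{b} = H_7^T\mathbf{c}$ and $G_7 = (H_7^TH_7)^{-1}$. Substituting these yields
\begin{equation*}
S = \mathbf{c}^T\mathbf{c} - \mathbf{c}^T H_7 (H_7^TH_7)^{-1} H_7^T \mathbf{c} = \mathbf{c}^T P_{H_7^\perp}\,\mathbf{c},
\end{equation*}
where $P_{H_7^\perp} = I - H_7(H_7^TH_7)^{-1}H_7^T$ is exactly the operator in the statement. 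The full-column-rank assumption of Section~\ref{subsec:schur_complement} guarantees that $(H_7^TH_7)^{-1}$ exists, so this expression is well defined.

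Next I will verify that $P_{H_7} \equiv H_7(H_7^TH_7)^{-1}H_7^T$ is the orthogonal projector onto $\mathrm{col}(H_7)$.
\begin{itemize}
\item It is symmetric, because $G_7$ is symmetric.
\item It is idempotent: $P_{H_7}^2 = H_7 G_7 (H_7^TH_7) G_7 H_7^T = P_{H_7}$.
\item It acts as the identity on $\mathrm{col}(H_7)$, since $P_{H_7}H_7 = H_7$.
\item It annihilates $\mathrm{col}(H_7)^\perp$, since $H_7^T\mathbf{y} = 0$ implies $P_{H_7}\mathbf{y} = 0$.
\end{itemize}
It follows that $P_{H_7^\perp} = I - P_{H_7}$ is also symmetric and idempotent, and that it is the orthogonal projector onto $\mathrm{col}(H_7)^\perp$.

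Symmetry and idempotence together give $\mathbf{c}^T P_{H_7^\perp}\mathbf{c} = \mathbf{c}^T P_{H_7^\perp}^T P_{H_7^\perp}\mathbf{c} = \|P_{H_7^\perp}\mathbf{c}\|^2$, which is the claim.

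As a consistency check, I will record that $\mathrm{rank}(H_8) = 8$ means $\mathbf{c}\notin\mathrm{col}(H_7)$. Therefore $P_{H_7^\perp}\mathbf{c}\neq 0$ and $S>0$, recovering the positivity used in Eq.~\eqref{eq:G8_schur}. The identity also gives the Pythagorean split $d = \|P_{H_7}\mathbf{c}\|^2 + S$. This split is what will motivate defining a collinearity coefficient as $\|P_{H_7}\mathbf{c}\|^2/d$, so that $S = d(1-\text{collinearity})$.

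There is no real obstacle here. The only step needing care is the projector verification, specifically idempotence and the characterization of the range. These are standard least-squares facts, and I would state them briefly rather than derive them at length.
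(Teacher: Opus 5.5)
Your proposal is correct and follows essentially the paper's argument: substitute $\mathbf{b}=H_7^T\mathbf{c}$ and $G_7=(H_7^TH_7)^{-1}$ into $S=d-\mathbf{b}^TG_7\mathbf{b}$, then use the orthogonal-projector properties of $P_{H_7}$. The only difference is bookkeeping. You combine the terms first into $\mathbf{c}^TP_{H_7^\perp}\mathbf{c}$ and then invoke symmetry and idempotence, whereas the paper writes $\mathbf{b}^TG_7\mathbf{b}=\|P_{H_7}\mathbf{c}\|^2$ and subtracts it from $\|\mathbf{c}\|^2$ via $P_{H_7}+P_{H_7^\perp}=I$. Your explicit projector checks make the paper's implicit Pythagorean step rigorous.
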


\begin{proof}
From $\mathbf{b} = H_7^T\mathbf{c}$ and $G_7 = (H_7^TH_7)^{-1}$, we have $\mathbf{b}^TG_7\mathbf{b} = \mathbf{c}^T H_7(H_7^TH_7)^{-1}H_7^T\mathbf{c} = \mathbf{c}^TP_{H_7}\,\mathbf{c} = \|P_{H_7}\,\mathbf{c}\|^2$, where $P_{H_7} = H_7(H_7^TH_7)^{-1}H_7^T$ is the orthogonal projector onto $\mathrm{col}(H_7)$. Because $P_{H_7} + P_{H_7^\perp} = I$, it follows that $S = \|\mathbf{c}\|^2 - \|P_{H_7}\,\mathbf{c}\|^2 = \|P_{H_7^\perp}\,\mathbf{c}\|^2$.
\end{proof}

The decomposition carries over unchanged to weighted measurements: with a positive-definite weight $W$, the inverse of the measurement-noise covariance, replacing each Gram matrix by its $W$-weighted form leaves the inflation identity intact, the projection of Theorem~\ref{thm:S_projection} being taken in the $W$-weighted inner product; the unweighted form is retained here because the analysis targets the geometric mechanism rather than a noise-dependent error budget.

Define the collinearity coefficient as the fraction of $\mathbf{c}$ lying in the column space of $H_7$,
\begin{equation}\label{eq:rho_def}
\zeta^2 \equiv \frac{\|P_{H_7}\,\mathbf{c}\|^2}{\|\mathbf{c}\|^2} = \frac{\mathbf{b}^TG_7\mathbf{b}}{d} \in [0,1]
\end{equation}
The Schur complement then takes the form
\begin{equation}\label{eq:S_rho}
S = d(1 - \zeta^2)
\end{equation}
and the GDOP inflation can be written compactly as
\begin{equation}\label{eq:gdop_inflation_final}
\mathrm{GDOP}_8^2 - \mathrm{GDOP}_7^2 = \frac{\|\boldsymbol{\xi}\|^2 + 1}{d(1 - \zeta^2)}
\end{equation}
where $\boldsymbol{\xi} = G_7\mathbf{b}$ is the least-squares coefficient vector introduced above, which best approximates $\mathbf{c}$ as a linear combination of the columns of $H_7$. The numerator captures how well the clock bias column aligns with sensitive state directions; the denominator measures how much of $\mathbf{c}$ is independent of the 7-state subspace.

The inflation decomposes by state group. Partitioning $\boldsymbol{\xi}$ conformally as $\boldsymbol{\xi} = [\boldsymbol{\xi}_p^T,\; \boldsymbol{\xi}_v^T,\; \xi_d]^T$ (position, velocity, drift), the covariance increase for each block is
\begin{equation}
[G_8]_{pp} - [G_7]_{pp} = \frac{1}{S}\,\boldsymbol{\xi}_p\boldsymbol{\xi}_p^T \in \mathbb{R}^{3 \times 3}, \qquad
[G_8]_{vv} - [G_7]_{vv} = \frac{1}{S}\,\boldsymbol{\xi}_v\boldsymbol{\xi}_v^T \in \mathbb{R}^{3 \times 3}, \qquad
[G_8]_{dd} - [G_7]_{dd} = \frac{1}{S}\,\xi_d^2 \in \mathbb{R}
\label{eq:dop_inflation_all}
\end{equation}
Each inflation term is a rank-1 outer product scaled by $1/S$: the variance of each state component either increases or remains unchanged, consistent with Theorem~\ref{thm:covariance_inflation}. A large $\|\boldsymbol{\xi}_p\|$ indicates that the clock bias column is well-approximated by the position columns, causing position estimation to suffer the most; similarly for $\|\boldsymbol{\xi}_v\|$ (velocity) and $|\xi_d|$ (drift). Which channel dominates depends on the constellation geometry and is evaluated for a representative configuration in Section~\ref{sec:example}. Figure~\ref{fig:schur_flow} summarizes the complete decomposition, from the block information matrix through the collinearity coefficient to the GDOP inflation.

\vspace{-10pt}

\begin{figure}[htb]
\centering
\scalebox{0.8}{%
\begin{tikzpicture}[
    >=Stealth,
    ann/.style={font=\footnotesize},
    stepbox/.style={draw=gray!50, rounded corners=3pt, inner sep=8pt,
                    fill=white, font=\normalsize, align=center},
    arr/.style={->, thick, gray!70!black},
    annlabel/.style={font=\small, text=gray!50!black, align=center},
  ]
 
  \node[stepbox, fill=blue!5] (M8) {%
    $M_8 = \begin{bmatrix}
      M_7 & \mathbf{b} \\[3pt]
      \mathbf{b}^T & d
    \end{bmatrix}$
  };
  \node[annlabel, above=3pt] at (M8.north) {Information matrix};
 
  \node[annlabel, below=6pt] at (M8.south) {%
    $\mathbf{b} = H_7^T\mathbf{c}$, \;
    $d = \|\mathbf{c}\|^2$
  };
 
  \node[stepbox, fill=red!6, right=2.5cm of M8] (schur) {%
    $S = d\,(1 - \zeta^2)$\\[4pt]
    ${\displaystyle = \|P_{H_7^\perp}\,\mathbf{c}\|^2}$
  };
  \node[annlabel, above=3pt] at (schur.north) {Schur complement};
 
  \draw[arr] (M8) -- (schur)
    node[midway, above=2pt, annlabel] {invert via}
    node[midway, below=2pt, annlabel] {block elimination};
 
  \node[stepbox, fill=orange!6, right=2.0cm of schur] (covinf) {%
    $\Delta G = \dfrac{1}{S}\,G_7\mathbf{b}\mathbf{b}^TG_7$\\[6pt]
    {\footnotesize rank-1, \;$\Delta G \succeq 0$}
  };
  \node[annlabel, above=3pt] at (covinf.north) {Covariance inflation};
 
  \draw[arr] (schur) -- (covinf)
    node[midway, above=2pt, annlabel] {$G_8^{(7)}$}
    node[midway, below=2pt, annlabel] {$= G_7 + \Delta G$};
 
  \node[stepbox, fill=green!6, right=1.5cm of covinf] (gdop) {%
    $\mathrm{GDOP}_8^2 - \mathrm{GDOP}_7^2$\\[4pt]
    ${\displaystyle = \frac{\|\boldsymbol{\xi}\|^2 + 1}
      {d\,(1 - \zeta^2)}}$
  };
  \node[annlabel, above=3pt] at (gdop.north) {GDOP inflation};
 
  \draw[arr] (covinf) -- (gdop)
    node[midway, above=2pt, annlabel] {trace}
    node[midway, below=2pt, annlabel] {$+\,1/S$};
 
  \coordinate (rhoL) at ($(schur.south west)+(-0.1, -0.8)$);
  \coordinate (rhoR) at ($(gdop.south east)+(0.1, -0.8)$);
  \coordinate (rhoMid) at ($(rhoL)!0.5!(rhoR)$);
 
  \draw[decorate, decoration={brace, amplitude=5pt, mirror, raise=-15pt},
        red!60!black, thick]
    (rhoL) -- (rhoR);
 
  \node[below=-10pt, font=\normalsize, red!60!black, align=center] at (rhoMid) {%
    Collinearity coefficient \;
    $\zeta^2 = \|P_{H_7}\,\mathbf{c}\|^2 / \|\mathbf{c}\|^2
    \;\in [0,1]$
  };
 
  \node[annlabel, red!50!black, align=center]
    at ($(rhoMid)+(0, -0.4)$) {%
    $\zeta^2 \!\to\! 1$:\; $\mathbf{c} \approx$
    col$(H_7)$, \;$S \to 0$,
    \;GDOP $\to \infty$
    \hspace{1.2cm}
    $\zeta^2 \!\to\! 0$:\; $\mathbf{c} \perp$
    col$(H_7)$, \;$S \to d$,
    \;minimal inflation};
 
\end{tikzpicture}
}%
\caption{Schur complement decomposition of the eight-state information matrix. The collinearity coefficient $\zeta^2$ measures the fraction of the clock bias column $\mathbf{c}$ lying within the column space of $H_7$; the residual $1-\zeta^2$ determines the Schur complement $S$ and, through Eq.~\eqref{eq:gdop_inflation_final}, the GDOP inflation.}
\label{fig:schur_flow}
\end{figure}

\vspace{-10pt}

\subsection{Bounds and Limiting Cases}\label{subsec:bounds}

The coupling structure established in Section~\ref{sec:coupling} imposes constraints on the achievable collinearity.

\begin{corollary}[Impossibility of Perfect Decorrelation]\label{cor:high_collinearity}
For any constellation with at least one satellite strictly above the horizon, $\zeta^2 > 0$ at any altitude. Perfect decorrelation ($\zeta^2 = 0$) is impossible.
\end{corollary}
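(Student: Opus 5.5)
The plan is to show that $P_{H_7}\mathbf{c} \neq \mathbf{0}$. By the definition in Eq.~\eqref{eq:rho_def}, $\zeta^2 = \mathbf{b}^T G_7 \mathbf{b}/d$. Under the standing full-rank assumption of Section~\ref{subsec:schur_complement}, $G_7 = M_7^{-1}$ is positive definite. So $\zeta^2 > 0$ holds exactly when $\mathbf{b} = H_7^T\mathbf{c} \neq \mathbf{0}$ and $d > 0$. Equivalently, by Theorem~\ref{thm:S_projection}, the claim is that $\mathbf{c}$ is not orthogonal to $\mathrm{col}(H_7)$. The proof therefore reduces to exhibiting one column of $H_7$ with a nonzero inner product against $\mathbf{c}$.

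The natural column is the clock-drift column $A_d = \ones$, whose inner product with $\mathbf{c}$ is $b_d = \sum_j c^j$. By Lemma~\ref{lem:c_negative}, every satellite at or above the horizon has $c^j \le 0$, and every satellite strictly above the horizon has $c^j < 0$. Corollary~\ref{cor:drift_correlation} then gives $b_d < 0$ whenever at least one satellite is strictly above the horizon. The same satellite gives $d = \|\mathbf{c}\|^2 \ge (c^j)^2 > 0$, so the ratio defining $\zeta^2$ is well defined.

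Combining these, $\mathbf{b} \neq \mathbf{0}$ because its drift component $b_d$ is nonzero. Positive definiteness of $G_7$ then gives $\mathbf{b}^T G_7 \mathbf{b} > 0$, and dividing by $d>0$ yields $\zeta^2 > 0$. The vertical-velocity coupling $b_{v,z} = \sum_j z^j c^j < 0$ gives an independent second witness and could be noted for robustness. Neither argument uses any particular value of $\beta \in (0,1)$, so the conclusion holds at every altitude. Mixed-altitude constellations are also covered, because the sign statement of Lemma~\ref{lem:c_negative} holds in each satellite's own normalization, and the signs survive any common positive rescaling.

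The main point needing care is the reliance on $G_7 \succ 0$, which is why the full-rank hypothesis $\operatorname{rank}(H_7)=7$ must be invoked explicitly. If one wanted to avoid it, the fallback is the projection form: $\|P_{H_7}\mathbf{c}\| \ge |\langle \mathbf{c}, \ones\rangle|/\|\ones\| = |b_d|/\sqrt{N} > 0$. This holds because $\ones/\sqrt{N}$ is a unit vector in $\mathrm{col}(H_7)$, and it gives the explicit lower bound $\zeta^2 \ge b_d^2/(N d)$ independently of any rank condition. I would present this bound as the proof, since it is cleaner and quantitative.
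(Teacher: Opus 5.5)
Your proof is correct and follows the same route as the paper: both reduce $\zeta^2=0$ to $\mathbf{b}=H_7^T\mathbf{c}=\mathbf{0}$, then rule this out because the drift component is $b_d=\sum_j c^j<0$ by Corollary~\ref{cor:drift_correlation}. Your fallback bound $\zeta^2\ge b_d^2/(Nd)$ is also valid, since it projects onto the unit vector $\ones/\sqrt{N}\in\mathrm{col}(H_7)$; it is a quantitative sharpening of the same witness and does not need the full-rank assumption, provided $P_{H_7}$ is read as the orthogonal projector onto $\mathrm{col}(H_7)$.
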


\begin{proof}
By Theorem~\ref{thm:S_projection}, $\zeta^2 = 0$ requires $\mathbf{b} = H_7^T\mathbf{c} = \mathbf{0}$. However, $b_d = \sum_{j=1}^N c^j < 0$ by Corollary~\ref{cor:drift_correlation}, so $\mathbf{b} \neq \mathbf{0}$ and $\zeta^2 > 0$.
\end{proof}

Corollary~\ref{cor:high_collinearity} excludes the lower endpoint $\zeta^2 = 0$; the opposite endpoint is attained exactly. When the visible satellites share a common elevation, altitude, and aspect angle, the clock-bias entries collapse onto a single value, and the column becomes a scalar multiple of the clock-drift column already contained in $H_7$.

\begin{proposition}[Single-Shell Collinearity Ceiling]\label{prop:collinearity_ceiling}
If every visible satellite shares a common elevation, altitude, and aspect angle, $z^j = z$, $\beta^j = \beta$, and $\chi^j = \chi$ for all $j$, then the clock bias column is a scalar multiple of the clock-drift column, $\mathbf{c} = c(z;\beta,\chi)\,\mathbf{1}$, so that $\zeta^2 = 1$ and $S = 0$, and the eight-state GDOP is unbounded.
\end{proposition}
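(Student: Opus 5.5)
The argument is direct and has three steps: show the column collapses, compute the projection, then read off the Schur complement and the GDOP.

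First, I establish that the column is constant. By Eq.~\eqref{eq:coupling_analytical}, each entry $c^j$ is a function of $(z^j,\beta^j,\chi^j)$ only. The normalized slant range $\kappa$ is also a function of $(z,\beta)$ only, by Eq.~\eqref{eq:kappa}, and $\cos^2\varepsilon = 1-z^2$. So under the hypothesis $z^j=z$, $\beta^j=\beta$, $\chi^j=\chi$, every entry equals the same scalar $c \equiv c(z;\beta,\chi)$. Hence $\mathbf{c} = c\,\mathbf{1} = c\,A_d$.

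Second, I compute the projection. Since $A_d=\mathbf{1}$ is a column of $H_7$, the vector $\mathbf{c}$ lies in $\mathrm{col}(H_7)$. Therefore $P_{H_7}\mathbf{c}=\mathbf{c}$ and $P_{H_7^\perp}\mathbf{c}=\mathbf{0}$. Theorem~\ref{thm:S_projection} then gives $S=0$.

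Third, I obtain the collinearity coefficient and the GDOP. For $\zeta^2$ to be defined I need $d=\|\mathbf{c}\|^2>0$, that is, $c\neq 0$. Lemma~\ref{lem:c_negative} supplies this whenever $z>0$, and also at $z=0$ with $\chi\neq 0$. Then Eq.~\eqref{eq:rho_def} gives $\zeta^2=\|\mathbf{c}\|^2/\|\mathbf{c}\|^2=1$, which is consistent with $S=d(1-\zeta^2)=0$ in Eq.~\eqref{eq:S_rho}. In the degenerate case $c=0$ (horizon, in-plane), $\mathbf{c}=\mathbf{0}$ and $S=0$ still holds trivially, so the conclusion about the GDOP is unchanged.

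For unboundedness, note that $H_8=[H_7\;\, c\mathbf{1}]$ has linearly dependent columns. Hence $M_8$ is singular and $\mathrm{Tr}\,M_8^{-1}$ is undefined, which is infinite GDOP. Equivalently, in the limiting form of Eq.~\eqref{eq:gdop_inflation_final}, the denominator $d(1-\zeta^2)\to 0$ while the numerator $\|\boldsymbol{\xi}\|^2+1\ge 1$, so the inflation term diverges.

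The main obstacle is a tension with the standing full-rank assumption of Section~\ref{subsec:schur_complement}. That assumption requires $\operatorname{rank}(H_8)=8$, which fails here, so Eq.~\eqref{eq:G8_schur} cannot be used literally. I would resolve this by working directly with $\zeta^2$ and $S$ through Theorem~\ref{thm:S_projection}, whose proof needs only $\operatorname{rank}(H_7)=7$. I would then interpret "unbounded" as a limit: perturb the geometry toward the common-shell configuration, so that $S\to 0^+$ by continuity of the projection, and the GDOP$_8^2$ formula blows up.
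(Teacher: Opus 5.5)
Your Steps 1--3 reproduce the paper's proof: the entries collapse to $c(z;\beta,\chi)$, so $\mathbf{c}=c(z;\beta,\chi)A_d\in\mathrm{col}(H_7)$, hence $\zeta^2=1$ and $S=0$. Your separate handling of the horizon in-plane case $c=0$, where $\zeta^2$ is the indeterminate $0/0$, is a point the paper passes over. The error is in how you resolve the rank tension. You treat $H_8$ as the only rank-deficient matrix, and you fall back on Theorem~\ref{thm:S_projection} because it ``needs only $\operatorname{rank}(H_7)=7$''. But the hypothesis $z^j=z$ already makes the third velocity column equal to $z\,\mathbf{1}=z\,A_d$ (the zero column if $z=0$), so $\operatorname{rank}(H_7)\le 6$. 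Then $M_7$ is singular, and $G_7$, $\mathbf{b}^TG_7\mathbf{b}$ and $\boldsymbol{\xi}$ do not exist. The formula $P_{H_7}=H_7(H_7^TH_7)^{-1}H_7^T$ on which Theorem~\ref{thm:S_projection} rests is also undefined, so your Step~2 invokes a result whose hypothesis fails.

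The repair is to take $P_{H_7}$ as the orthogonal projector onto $\mathrm{col}(H_7)$ (e.g.\ $H_7H_7^{+}$ with the Moore--Penrose pseudoinverse), which needs no rank condition. Then $P_{H_7}\mathbf{c}=\mathbf{c}$ gives $\zeta^2=1$ from the first expression in Eq.~\eqref{eq:rho_def} and $S=0$ from Eq.~\eqref{eq:S_rho}, which is all the paper's proof uses. Unboundedness then follows immediately from the singularity of $M_8$. Note that $\mathrm{GDOP}_7$ is infinite here too, so the clock-bias column is not the only source of degeneracy.

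Your limiting argument needs the same correction. The projector onto $\mathrm{col}(H_7(t))$ is discontinuous where the rank drops, so ``continuity of the projection'' does not justify $S\to0^+$. What does work is that $\mathbf{1}=A_d$ remains a column of $H_7(t)$ for every perturbed geometry. Hence $S(t)=\|P_{H_7(t)^\perp}(\mathbf{c}(t)-c(z;\beta,\chi)\mathbf{1})\|^2\le\|\mathbf{c}(t)-c(z;\beta,\chi)\mathbf{1}\|^2\to0$. Then $\mathrm{GDOP}_8^2\ge[G_8]_{88}=1/S(t)\to\infty$ follows directly from Eq.~\eqref{eq:G8_schur}. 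This avoids routing through the inflation formula, whose $\mathrm{GDOP}_7$ term also diverges in this limit.
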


\begin{proof}
Under the stated conditions every entry $c^j = c(z;\beta,\chi)$ of Eq.~\eqref{eq:coupling_analytical} takes the same value, so $\mathbf{c} = c(z;\beta,\chi)\,\mathbf{1}$. The clock-drift column $A_d = \mathbf{1}$ is a column of $H_7$, hence $\mathbf{c} \in \mathrm{col}(H_7)$ and $P_{H_7}\mathbf{c} = \mathbf{c}$. By Eq.~\eqref{eq:rho_def}, $\zeta^2 = \|P_{H_7}\mathbf{c}\|^2/\|\mathbf{c}\|^2 = 1$, and by Eq.~\eqref{eq:S_rho}, $S = d(1-\zeta^2) = 0$.
\end{proof}

The two results bracket the coefficient, $\zeta^2 \in (0,1]$, with the floor never reached and the ceiling reached precisely when the satellites are indistinguishable in elevation, altitude, and aspect angle, the degenerate case in which the full-rank assumption fails and the eight-state dilution is unbounded. Decorrelation is the separation of the satellites along these three variables, developed in Section~\ref{sec:altitude}. A competing constraint limits how far a useful constellation departs from the ceiling: the eight-state GDOP rewards a large total sensitivity $d = \|\mathbf{c}\|^2$, largest for the high-elevation, low-altitude satellites that lie nearest the ceiling, where $|c| \to 1$ and the clock-bias and velocity $z$-columns are most nearly proportional. The geometry that minimizes GDOP is therefore drawn toward high collinearity; configurations with small $\zeta^2$ exist but carry small $d$ and poor GDOP, so a near-optimal constellation settles at a high but sub-unity collinearity, as the worked example of Section~\ref{sec:example} confirms. The clock bias variance admits simple bounds in terms of $d = \|\mathbf{c}\|^2$.

\begin{proposition}[Clock Bias Variance Lower Bound]\label{prop:clock_bias_bounds}
For $N$ satellites with clock bias sensitivities $c^1, \ldots, c^N$,
\begin{equation}
\sigma_{\delta_R}^2 \;\geq\; \frac{1}{\sum_{j=1}^N (c^j)^2} \;=\; \frac{1}{d}
\end{equation}
\end{proposition}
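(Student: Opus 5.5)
The plan is to read $\sigma_{\delta_R}^2$ as the clock-bias diagonal entry of the eight-state covariance, i.e. the $(8,8)$ entry of $G_8 = M_8^{-1}$ in the ordering of Section~\ref{subsec:formulations}, and then to use the Schur complement form of Eq.~\eqref{eq:G8_schur}. That equation gives $[G_8]_{88} = 1/S$ with $S = d - \mathbf{b}^T G_7 \mathbf{b}$. Since the full-rank assumption of Section~\ref{subsec:schur_complement} makes $G_7 = M_7^{-1}$ positive definite, we have $\mathbf{b}^T G_7 \mathbf{b} \ge 0$. Therefore $S \le d$, and, because $S>0$, it follows that $\sigma_{\delta_R}^2 = 1/S \ge 1/d$.

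An equivalent route is Theorem~\ref{thm:S_projection} together with Eq.~\eqref{eq:S_rho}. These give $S = d(1-\zeta^2)$ with $\zeta^2 \in [0,1]$, hence
\begin{equation*}
\sigma_{\delta_R}^2 = \frac{1}{d(1-\zeta^2)} \ge \frac{1}{d}.
\end{equation*}
This reading also shows when equality holds: only if $\zeta^2 = 0$, which Corollary~\ref{cor:high_collinearity} rules out whenever some satellite is strictly above the horizon. So the inequality is in fact strict in every physically relevant configuration. I would remark on this but keep the statement as given.

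For robustness I would also note the general fact behind the bound, namely $[M^{-1}]_{kk} \ge 1/M_{kk}$ for any positive definite $M$. This follows from Cauchy--Schwarz in the $M$-inner product applied to $e_k$: $1 = (e_k^T e_k)^2 \le (e_k^T M e_k)(e_k^T M^{-1} e_k)$. The inequality then holds independently of the column ordering, and $M_{88} = \|\mathbf{c}\|^2 = d$.

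The main obstacle is one of bookkeeping rather than mathematics: confirming that $\sigma_{\delta_R}^2$ denotes the unweighted, unit-noise covariance entry, consistent with the GDOP normalization of Eq.~\eqref{eq:gdop_def}, and that $d>0$. The latter follows from Lemma~\ref{lem:c_negative} as soon as one satellite is above the horizon, and it is implied in any case by $\operatorname{rank}(H_8)=8$.
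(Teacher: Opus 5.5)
Your proposal is correct and is essentially the paper's own proof. Both read $\sigma_{\delta_R}^2$ as the $(8,8)$ entry $1/S$ of Eq.~\eqref{eq:G8_schur} and conclude from $0 < S = d(1-\zeta^2) \le d$, with equality excluded by $\zeta^2 > 0$ as the paper also remarks. Your extra Cauchy--Schwarz argument $[M^{-1}]_{kk} \ge 1/M_{kk}$ is a valid generalization that does not depend on column ordering, though the paper does not need it.
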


\begin{proof}
From Eq.~\eqref{eq:G8_schur}, $\sigma_{\delta_R}^2 = 1/S = 1/[d(1-\zeta^2)]$. Because $H_8$ has full column rank, $S > 0$ and $0 \leq \zeta^2 < 1$, so $0 < S \leq d$, giving $\sigma_{\delta_R}^2 \geq 1/d$. The lower bound is achieved when $\zeta^2 = 0$, which is never exactly attained.
\end{proof}

Defining the pairwise correlation between the clock bias column and the velocity $z$-column as
\begin{equation}\label{eq:rcz_def}
r_{c,z} \equiv \frac{\sum_{j=1}^N z^j\,c^j}{\sqrt{\sum_{j=1}^N (z^j)^2}\;\sqrt{d}}
\end{equation}
the collinearity coefficient $\zeta^2$ measures how well $\mathbf{c}$ is predicted by all seven state columns jointly, whereas $r_{c,z}^2$ captures only its correlation with a single velocity component. By construction, $\zeta^2 \geq r_{c,z}^2$. At fixed altitude, the velocity $z$-component correlation dominates, and the two quantities are numerically close ($\zeta^2 \approx r_{c,z}^2$); with altitude diversity, other column correlations contribute, and the gap widens. The Schur complement $S = d(1-\zeta^2)$ remains the proper measure of GDOP inflation throughout; $r_{c,z}$ is useful as a diagnostic because it isolates the single largest source of collinearity.

Equation~\eqref{eq:gdop_inflation_final} explicates the two competing effects that govern Doppler GDOP: the collinearity $\zeta^2$, which drives the denominator toward zero as the clock bias column becomes more aligned with the 7-state subspace, and the total sensitivity $d = \|\mathbf{c}\|^2$, which counteracts the inflation by increasing $S$. When the collinearity is mild, the inflation admits a simple bound independent of the Schur complement, so the 7-state geometry serves as a usable proxy for the full 8-state problem. The next corollary makes this precise: once the clock bias column is more orthogonal than parallel to the 7-state subspace, the inflation is bounded in terms of $\|\boldsymbol{\xi}\|^2$ and $d$, although $\|\boldsymbol{\xi}\|$ itself remains a geometry-dependent quantity. Whether this condition is attainable is the question taken up in Table~\ref{tab:elevation_bounds} and Section~\ref{sec:altitude}.
\begin{corollary}[Sufficient Condition for 7-State Proxy]\label{cor:proxy_sufficient}
If $\zeta^2 < 1/2$ (clock bias column is more orthogonal than parallel to the 7-state subspace), then
\begin{equation}\label{eq:proxy_bound}
\mathrm{GDOP}_8^2 - \mathrm{GDOP}_7^2 < \frac{2(\|\boldsymbol{\xi}\|^2 + 1)}{d}
\end{equation}
\end{corollary}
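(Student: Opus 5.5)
The plan is to start from the exact inflation identity of Eq.~\eqref{eq:gdop_inflation_final},
\[
\mathrm{GDOP}_8^2 - \mathrm{GDOP}_7^2 = \frac{\|\boldsymbol{\xi}\|^2 + 1}{d(1-\zeta^2)},
\]
and to bound only the denominator, leaving the numerator untouched. The numerator $\|\boldsymbol{\xi}\|^2+1$ is strictly positive and fixed by the geometry, so it needs no control.

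For the denominator, the hypothesis $\zeta^2 < 1/2$ gives $1-\zeta^2 > 1/2$. We also need $d = \|\mathbf{c}\|^2 > 0$. This follows from Lemma~\ref{lem:c_negative} once a satellite is strictly above the horizon. It also follows directly from the full-column-rank standing assumption of Section~\ref{subsec:schur_complement}, because a zero column $\mathbf{c}$ would make $H_8$ rank deficient.

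With both facts in hand, $d(1-\zeta^2) > d/2 > 0$, so $1/[d(1-\zeta^2)] < 2/d$. Multiplying by the positive numerator gives the strict inequality of Eq.~\eqref{eq:proxy_bound}.

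The only point needing care is strictness and well-definedness. This means confirming $S = d(1-\zeta^2) > 0$, which holds because $\zeta^2 < 1/2 < 1$ and $d > 0$, and noting that the strict inequality survives multiplication by the strictly positive factor $\|\boldsymbol{\xi}\|^2+1$. No other step is substantive. I would add a remark that $\|\boldsymbol{\xi}\|$ stays geometry-dependent, so the bound controls the inflation relative to $d$ but does not remove $\boldsymbol{\xi}$.
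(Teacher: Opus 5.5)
Your proof is correct and follows the paper's argument: both start from the exact inflation identity of Eq.~\eqref{eq:gdop_inflation_final}, use $\zeta^2<1/2$ to get $S=d(1-\zeta^2)>d/2$, and conclude $1/S<2/d$. Your explicit check that $d>0$ (via the full-column-rank assumption or Lemma~\ref{lem:c_negative}) and that the numerator $\|\boldsymbol{\xi}\|^2+1$ is positive fills in details the paper leaves implicit, without changing the route.
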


\begin{proof}
From Eq.~\eqref{eq:gdop_inflation_final} with $S = d(1-\zeta^2)$, $\zeta^2 < 1/2$ gives $S > d/2$, so $1/S < 2/d$.
\end{proof}

Table~\ref{tab:elevation_bounds} illustrates, at fixed altitude, how $d$ and $|r_{c,z}|$ respond to the elevation distribution and to the aspect angle. The per-satellite sensitivity spans the band between its in-plane and cross-track envelopes, so $d$ itself ranges within the band shown; high-elevation satellites increase $d$ but drive $|r_{c,z}|$ toward unity, while low-elevation satellites raise $d$ only off-plane. When every satellite shares one elevation and one aspect angle, the clock-bias and velocity $z$-columns are proportional and $|r_{c,z}| = 1$ exactly, so the aspect angle does not decorrelate an equal-elevation configuration. The all-horizon row is the degenerate exception: the clock bias column vanishes in-plane, and although it is nonzero off-plane, $|r_{c,z}|$ is undefined because the velocity $z$-column vanishes at zero elevation. Across the configurations of Table~\ref{tab:elevation_bounds}, $|r_{c,z}|$ stays near unity at fixed altitude: a large $d$ is produced by high-elevation satellites, which are precisely those that drive $|r_{c,z}|$ toward unity, so at fixed altitude the total sensitivity and the decorrelation cannot be raised together; breaking this tension requires altitude diversity (Section~\ref{sec:altitude}).

\vspace{-10pt}

\begin{table}[htbp!]
\caption{Sensitivity band for $N = 8$, $\beta = 0.891$ (780~km), evaluated at the in-plane floor ($\chi = 0$) and cross-track ceiling ($\chi = \pi/2$) of Eq.~\eqref{eq:coupling_envelopes}. The band on $d$ shows how the aspect angle shifts the sensitivity magnitude at fixed altitude, while $|r_{c,z}|$ stays near unity throughout.}
\label{tab:elevation_bounds}
\centering
\footnotesize
\begin{tabular}{lcccc}
 & \multicolumn{2}{c}{$d$} & \multicolumn{2}{c}{$|r_{c,z}|$} \\
\cmidrule(lr){2-3}\cmidrule(lr){4-5}
Configuration & in-plane & cross-track & in-plane & cross-track \\
\midrule
All zenith ($z^j = 1$)            & 8.00 & 8.00 & 1.00 & 1.00 \\
All horizon ($z^j = 0$)           & 0    & 0.37 & --   & --   \\
4 zenith + 4 horizon              & 4.00 & 4.18 & 1.00 & 0.98 \\
Uniform ($z^j = 0.5$)             & 0.26 & 2.55 & 1.00 & 1.00 \\
Linear spread$^\ddagger$          & 1.01 & 2.98 & 0.95 & 0.99 \\
\bottomrule
\multicolumn{5}{l}{\footnotesize $^\ddagger$Linear spread: $z^j = j/9$ for $j = 1,\ldots,8$.}\\
\end{tabular}
\end{table}

\vspace{-5pt}

\section{Altitude Diversity and the Sensitivity--Decorrelation Tradeoff}\label{sec:altitude}
This section identifies altitude diversity as the primary mechanism that breaks the collinearity established in Section~\ref{sec:coupling}, and characterizes its limits. We first show that satellites at different altitudes but the same elevation carry different clock bias sensitivities, displacing them from the single-shell band that forces high collinearity and thereby increasing the Schur complement; the satellite heading provides a second, weaker lever within that band. We then identify the competing effect: higher-altitude satellites decorrelate the clock bias column but carry weaker Doppler signatures, reducing the total sensitivity, so the optimal allocation balances the two. The two effects favor extreme altitude separation over graduated spacing for the satellites that carry appreciable coupling, predicting a bang-bang tendency for those satellites, while satellites near zenith are altitude-degenerate; the optimization of Section~\ref{sec:example} bears this out across optimizer restarts.

\vspace{-10pt}

\subsection{Altitude Diversity as a Decorrelation Mechanism}\label{subsec:decorrelation_mech}

At a fixed orbital altitude, the clock bias sensitivity $c^j = c(z^j;\beta^j,\chi^j)$ is confined to the band between the in-plane and cross-track envelopes, which depend on elevation and altitude alone (Proposition~\ref{prop:latitude_independence}), so for any single-altitude constellation every satellite lies within the same single-shell band. Across the high-sensitivity elevation distributions that the GDOP objective favors, this confinement drives $\zeta^2$ toward the single-shell ceiling of Proposition~\ref{prop:collinearity_ceiling}, as Section~\ref{subsec:bounds} established; the aspect angle moves each satellite only within the band, a weak lever because the band is narrow relative to the elevation factor $\hat{\vec\rho}_z^j = z^j$ shared with the velocity $z$-column.

Altitude diversity breaks this alignment. Two satellites at the same elevation $\varepsilon$ but different altitudes ($\beta_1 \neq \beta_2$) share the velocity $z$-column entry $\hat{\vec\rho}_z = z$ but carry different clock-bias sensitivities, because the sensitivity of Eq.~\eqref{eq:coupling_analytical} varies with $\beta$ at fixed elevation and heading. This displaces satellites off the single-shell band, reducing $\zeta^2$ and increasing the Schur complement $S = d(1-\zeta^2)$. The variation with altitude is governed by $\partial c/\partial\beta$ at fixed elevation and heading, which vanishes at zenith, where $c = -1$ at every altitude and heading, and is appreciable for satellites away from zenith. Zenith satellites are therefore altitude-degenerate: their clock-bias sensitivity does not change with altitude, and any decorrelation between altitudes enters only through the $\eta^j/\eta_{\mathrm{ref}}$ scaling. The intrinsic separation between altitudes is greatest for the coupling-carrying satellites away from zenith. The quantitative impact on $\zeta^2$ is evaluated for a representative configuration in Section~\ref{sec:example}.

\vspace{-10pt}

\subsection{The Sensitivity--Decorrelation Tradeoff}\label{subsec:sd_tradeoff}

The scaling parameters $\gamma$ and $\eta$ decrease sharply with increasing orbital altitude (Table~\ref{tab:scaling_altitude}), reflecting the reduced Doppler signatures available from higher orbits. Across the LEO range of 400--1200~km, $\gamma$ varies by a factor of approximately $3.2$, and $\eta$ by a factor of $3.75$ (Table~\ref{tab:scaling_altitude}). Under common reference normalization, a low-altitude satellite contributes position columns roughly twice as large as one at the reference altitude, representing a genuine per-measurement information advantage.

This creates a fundamental tension. Low-altitude satellites provide stronger per-measurement sensitivity (larger $d = \|\mathbf{c}\|^2$) but are confined to a single-shell band, producing high $\zeta^2$. High-altitude satellites offer decorrelation through a different $\beta$ value but reduce $d$ through weaker sensitivity. The inflation formula in Eq.~\eqref{eq:gdop_inflation_final} contains both $d$ (in the denominator) and $\zeta^2$ (in the factor $1-\zeta^2$), so the optimal altitude allocation must balance sensitivity loss against collinearity reduction.

The tradeoff has diminishing returns. Adding a high-altitude satellite replaces a low-altitude one, decreasing $d$ (because $|c^j|$ is smaller at higher altitude) while reducing $\zeta^2$ (because the altitude contrast displaces the satellite from the single-shell band). The product $S = d(1-\zeta^2)$ increases only if the decorrelation gain outweighs the sensitivity loss. The first high-altitude satellite provides the largest decorrelation benefit, because it introduces a second band; subsequent high-altitude satellites offer diminishing marginal decorrelation while incurring the same per-satellite sensitivity cost ($\gamma^{400}/\gamma^{1200} \approx 3.2$). This asymmetry predicts that optimal configurations will allocate a small minority of satellites to high altitude, a prediction confirmed in the next section.

\vspace{-10pt}

\begin{table}[htb]
\caption{Scaling Parameter Variation with Orbital Altitude}
\label{tab:scaling_altitude}
\centering
\footnotesize
\begin{tabular}{lcccc}
\toprule
Altitude (km) & $\beta$ & $\gamma$ (mrad/s) & $\eta$ (m/s$^2$) & $\gamma/\gamma_{780}$ \\
\midrule
400 (ISS)      & 0.941 & 19.2  & 138.4 & 2.01 \\
550 (Starlink) & 0.921 & 13.8  &  96.3 & 1.44 \\
780 (Iridium)  & 0.891 &  9.57 &  63.67 & 1.00 \\
1200 (OneWeb)  & 0.842 &  6.05 &  36.92 & 0.63 \\
\bottomrule
\end{tabular}%
\vspace{-10pt} 
\end{table}

\vspace{-5pt}

\subsection{Implications for Constellation Geometry}\label{subsec:implications}

Both tendencies driving altitude selection favor extremes rather than graduated spacing. Sensitivity increases monotonically as altitude decreases (Table~\ref{tab:scaling_altitude}), so reducing a satellite's altitude improves its measurement quality throughout the LEO range. Decorrelation favors maximizing $|\beta_1 - \beta_2|$ to widen the separation between bands, which again favors the widest possible altitude separation. These two tendencies act on the satellites that carry appreciable coupling, namely those away from zenith, where $\partial c/\partial\beta$ is appreciable and altitude contrast actively decorrelates the clock bias column; for such a satellite both tendencies push its altitude toward an extreme. A satellite at or near zenith behaves differently: it lies on a flat direction of the objective in altitude, because $c = -1$ at every altitude and heading there and $\partial c/\partial\beta \to 0$, so its altitude is weakly constrained and the decorrelation it provides enters only through the $\eta^j/\eta_{\mathrm{ref}}$ scaling. This structure -- a bang-bang tendency toward the altitude extremes for the coupling-carrying satellites and altitude-degeneracy for those near zenith -- leaves the split between co-optimal configurations only weakly determined, and is examined across a family of optimizer restarts in Section~\ref{sec:example}.

However, the altitude separation available in current multi-constellation environments is modest. Between Starlink (${\sim}550$~km, $\beta = 0.921$) and OneWeb (1200~km, $\beta = 0.842$), $\Delta\beta \approx 0.08$. At this separation, the sensitivity cost of diverting satellites from low to high altitude limits the achievable $\zeta^2$ reduction, and the narrow tradeoff window constrains the effectiveness of altitude-based heuristics. Figure~\ref{fig:cz_scatter} illustrates the decorrelation mechanism.

\vspace{-10pt}

\begin{figure}[htb]
\centering
\includegraphics[width=0.75\textwidth]{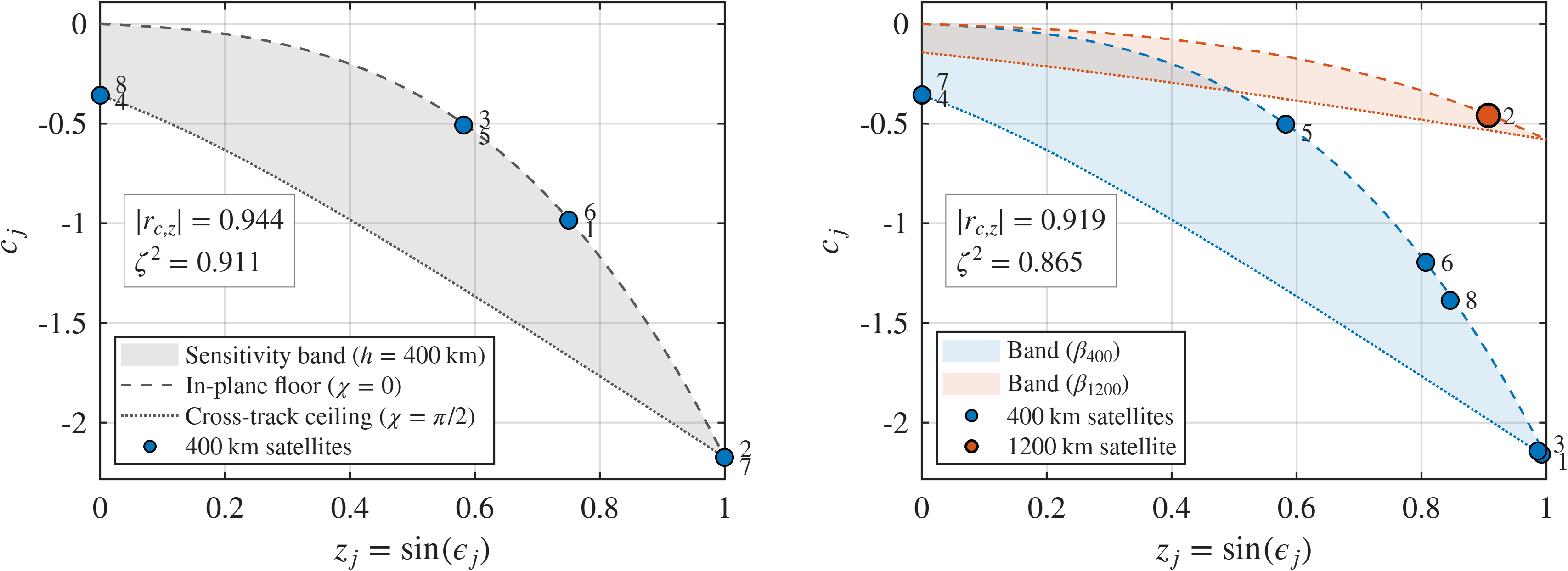}
\caption{Illustration of the decorrelation mechanism. Left: at a single altitude, satellites lie within the narrow band between the in-plane and cross-track envelopes $c(z;\beta)$, forcing high collinearity between the clock bias column and the velocity $z$-column. Right: with altitude diversity, satellites at a different orbital altitude occupy a distinct band $c(z;\beta')$, reducing $\zeta^2$ and increasing the Schur complement $S$. Entries and envelopes are shown in a common normalization at a 780~km reference altitude, so magnitudes exceed the per-satellite bound of Lemma~\ref{lem:c_negative}.} 
\label{fig:cz_scatter}
\vspace{-10pt} 
\end{figure}

\vspace{-10pt}

\section{Example: Multi-Altitude Single Epoch}\label{sec:example}
This section illustrates the analysis of Sections~\ref{sec:schur} and~\ref{sec:altitude} on a representative configuration, separating the effect of altitude level from that of altitude diversity, verifying the exact GDOP inflation formula, and confirming the structure of the optimized solutions across optimizer restarts. We optimize a synthetic two-shell constellation and compare its Schur complement diagnostics against two single-shell baselines, one at a reference altitude and one at the low altitude to which the optimizer drives the constellation, so that the second comparison holds altitude level fixed and isolates the contribution of altitude diversity. We then confirm the inflation formula to machine precision at the optimized geometry.

\vspace{-10pt}

\subsection{Setup}\label{subsec:example_setup}

To illustrate the predictions of Sections~\ref{sec:schur} and~\ref{sec:altitude}, we consider a synthetic two-shell constellation with $N = 8$ satellites whose altitudes are drawn from the range $[400,\,1200]$~km, together with two single-shell baselines used as controls. The Jacobian is normalized to a common reference altitude rather than per satellite, as \textcite{Psiaki2021} prescribes for constellations spanning several orbital radii, because a per-satellite normalization would rescale away the altitude dependence of the Doppler sensitivity that altitude diversity exploits, leaving the geometric mechanism invisible; a common reference preserves the relative sensitivity of satellites at different altitudes. The reference is set to 780~km ($\gamma_{\mathrm{ref}} = 9.57$~mrad/s, $\eta_{\mathrm{ref}} = 63.67$~m/s$^2$), the midpoint of the $[400,\,1200]$~km bracket rounded to the Iridium shell; the choice fixes only the overall scale of the dilution values, which cancels in the comparisons because every configuration shares the normalization.

An unconstrained elevation mask ($0^\circ$) is used so that the comparison isolates the geometric mechanism from operational visibility constraints, which depend on receiver and mission specifics. The two single-shell baselines place all eight satellites at a common altitude, one at the 780~km reference and one at 400~km, the low shell to which the optimizer drives the bulk of the variable-altitude constellation, so that the second comparison varies only the altitude spread.

Satellite positions and altitudes are determined by numerically minimizing the 8-state GDOP of Eq.~\eqref{eq:gdop_def} over elevation, azimuth, inclination, pass direction, and altitude, with the single-shell baselines fixing the altitude. These variables are not independent for a physical orbit: at each evaluation the satellite latitude implied by the sky position constrains the admissible inclination through $\min(i^j, \pi - i^j) \geq |\phi_{sat}^j|$, and Clairaut's relation \parencite{Geyer2016} then supplies the ground-track heading from the inclination, the pass direction, and that satellite latitude. The aspect angle and the apparent heading follow from the ground-track heading and the sky position. Every candidate, therefore, corresponds to a realizable circular-orbit state. These feasibility constraints are evaluated for a receiver on the equator. By Proposition~\ref{prop:latitude_independence} the sensitivity band is independent of this choice, whereas the aspect angles realized within it are not, so the optimized geometry reported below is specific to the equatorial receiver. The structural results of Sections~\ref{sec:coupling} and~\ref{sec:schur} hold independently of these feasibility constraints, which enter only here. The configuration is, nonetheless, an instantaneous Jacobian rather than a propagated constellation: two satellites may occupy the same sky direction at a single epoch -- for instance near the zenith -- which no single constellation realizes simultaneously, just as the regular tetrahedron that minimizes pseudorange GDOP is an instantaneous optimum rather than a sustained arrangement. The example therefore illustrates the column structure that governs Doppler GDOP, and is not advanced as a constellation-design prescription.

The minimization uses particle swarm optimization \parencite{Kennedy1995}, a population-based stochastic search; the swarm comprises 400 particles run for up to 500 iterations, and to reduce sensitivity to local minima the optimizer is run from 30 independent random initializations, the lowest-GDOP solution being retained. The eight-state optimum is stable across these restarts, reproducing to a coefficient of variation of 1.5\%. The decision-variable bounds, the feasibility map, the restart protocol, and the condition numbers of $M_7$ and $M_8$ at the retained optima were recorded; no ill-conditioned configuration was retained.

\subsection{Numerical Evaluation}\label{subsec:example_eval}

The optimizer converges to a two-shell structure that places all satellites at the altitude extremes: in the retained solution, seven satellites at 400~km and one at 1200~km, with none at intermediate altitudes. The single high-shell satellite is a coupling-carrying satellite at mid-to-high elevation ($\varepsilon \approx 65^\circ$), where both $|c|$ and its altitude sensitivity $\partial c/\partial\beta$ are appreciable; two of the low-shell satellites sit near the horizon and carry the smallest sensitivity, and the remaining low-shell satellites span mid-to-high elevations. The structure is robust across restarts: in all 30 converged restarts every satellite settles at either the lower or the upper altitude bound and none at an intermediate altitude, consistent with the monotone-in-altitude tendency of Section~\ref{subsec:implications}. The split between the two shells is not sharply determined -- a seven-plus-one arrangement appears in most restarts and a six-plus-two arrangement in the remainder, the two reaching the same eight-state GDOP to within the 1.5\% restart variability -- so the shell count is co-optimal across a small family of configurations rather than uniquely fixed. Table~\ref{tab:example_results} compares the Schur complement diagnostics for this variable-altitude configuration against the two single-shell baselines, and Table~\ref{tab:example_config} lists the optimized variable-altitude geometry in full. Throughout the comparison $\mathrm{GDOP}_7$ is evaluated at the same geometry that minimizes $\mathrm{GDOP}_8$, so that the two dilution values share a common configuration and their difference isolates the cost of adding the clock bias column; this is the quantity the inflation formula of Eq.~\eqref{eq:gdop_inflation_final} predicts. The geometry that minimizes $\mathrm{GDOP}_7$ in its own right attains a lower value still, but comparing across two different geometries would conflate the inflation with the change of operating point.

\begin{table}[htb]
\caption{Schur Complement Diagnostics: Single-Shell Baselines and Variable Altitude ($N = 8$, $0^\circ$ Mask)}
\label{tab:example_results}
\centering
\footnotesize
\begin{tabular}{lccc}
\toprule
Quantity & Single (780~km) & Single (400~km) & Variable (400 + 1200~km) \\
\hline
$d = \|\mathbf{c}\|^2$            & 2.87             & 12.17            & 13.31 \\
$\zeta^2$                          & 0.918            & 0.911            & 0.865 \\
$S = d(1-\zeta^2)$                 & 0.234            & 1.084            & 1.793 \\
$(1-\zeta^2)^{-1}$                 & 12.27            & 11.23            & 7.42  \\
$\mathrm{GDOP}_7$                 & 2.46$^{\dagger}$ & 1.92$^{\dagger}$ & 1.89$^{\dagger}$ \\
$\mathrm{GDOP}_8$                 & 3.51             & 2.53             & 2.25  \\
\bottomrule
\multicolumn{4}{l}{\footnotesize $^\dagger$$\mathrm{GDOP}_7$ evaluated at the corresponding 8-state optimal geometry.}
\end{tabular}
\vspace{-10pt} 
\end{table}

\begin{table}[htb]
\caption{Optimized variable-altitude geometry ($N = 8$), sorted by elevation; the configuration of Figure~\ref{fig:example_geometry_varalt}. Here $c$ is the clock-bias entry in the common normalization, and the pass direction is ascending (A) or descending (D).}
\label{tab:example_config}
\centering
\footnotesize
\setlength{\tabcolsep}{4pt}
\begin{tabular}{cccccccc}
\toprule
Sat & Alt (km) & $\varepsilon$ ($^\circ$) & Az ($^\circ$) & $i$ ($^\circ$) & Pass & $\chi$ ($^\circ$) & $c$ \\
\midrule
4 &  400 &  0.0 &  89.0 &  89.7 & A & 88.8 & $-0.357$ \\
7 &  400 &  0.0 & 249.8 &  74.2 & D & 86.8 & $-0.356$ \\
5 &  400 & 35.7 & 351.8 &  81.3 & D &  0.6 & $-0.503$ \\
6 &  400 & 53.8 & 349.2 & 100.0 & A &  0.8 & $-1.197$ \\
8 &  400 & 57.8 & 182.6 & 100.0 & A & 12.6 & $-1.387$ \\
2 & 1200 & 65.1 & 156.7 &  52.6 & D & 14.2 & $-0.458$ \\
3 &  400 & 80.5 & 118.8 & 100.0 & D & 71.3 & $-2.141$ \\
1 &  400 & 82.8 & 343.4 &   4.4 & A & 77.8 & $-2.158$ \\
\bottomrule
\end{tabular}
\vspace{-10pt} 
\end{table}

The comparison separates two effects that the single-shell baselines hold apart. Lowering the constellation from the 780~km reference to a single shell at 400~km raises the total sensitivity $d$ from 2.87 to 12.17, a factor of $4.2$, driven by the per-satellite advantage $(\eta^{400}/\eta_{\mathrm{ref}})^2 \approx 4.7$: each 400~km satellite carries roughly $4.7$ times the clock-bias information of a 780~km satellite at the same elevation. This is an altitude-\emph{level} effect that any low constellation enjoys, and it leaves the collinearity essentially unchanged, $\zeta^2 = 0.918$ against $0.911$, so the Schur complement rises almost in proportion to $d$, from 0.234 to 1.084, and $\mathrm{GDOP}_8$ falls from 3.51 to 2.53.

Altitude \emph{diversity} is the distinct effect isolated by comparing the 400~km single shell with the variable-altitude constellation, which share the same low shell for seven of their eight satellites. Raising one satellite to 1200~km lowers the collinearity from $\zeta^2 = 0.911$ to $0.865$ as it separates from the single-shell band (Figure~\ref{fig:cz_scatter}), so that $(1-\zeta^2)^{-1}$ falls from 11.23 to 7.42; the Schur complement rises from 1.084 to 1.793 and $\mathrm{GDOP}_8$ improves from 2.53 to 2.25, a further reduction of about ten percent. Of the rise in the Schur complement from 0.234 at the reference shell to 1.793 with diversity, the step to 1.084 at the 400~km single shell is the altitude-level contribution and the step from 1.084 to 1.793 is the diversity contribution, factors of $4.6$ and $1.65$ respectively, so the diversity-specific gain is the smaller of the two by a wide margin. Within the single shell, the satellite heading is a weaker lever still, lowering $\zeta^2$ by $0.024$ from its in-plane value of $0.935$, against the $0.046$ removed by altitude diversity.

At the reference altitude the eight-state optimum carries a markedly higher dilution than at 400~km, with $\mathrm{GDOP}_8 = 3.51$ against $\mathrm{GDOP}_7 = 2.46$ evaluated at the same geometry; this gap is the covariance cost of the clock-bias coupling, reduced both by flying lower and by spreading in altitude. Figure~\ref{fig:example_geometry_varalt} contrasts the 8-state optimum at the 400~km single shell with the variable-altitude optimum, showing that the coupling-carrying satellite at mid-to-high elevation is the one drawn to the high-altitude shell, while the near-horizon satellites, which carry the smallest sensitivity, and the near-zenith satellites, whose sensitivity is altitude-degenerate, remain at low altitude; the satellite the optimizer raised is the one where altitude contrast most reduces the collinearity.

\begin{figure}[htb]
\centering
\includegraphics[width=0.6\textwidth]{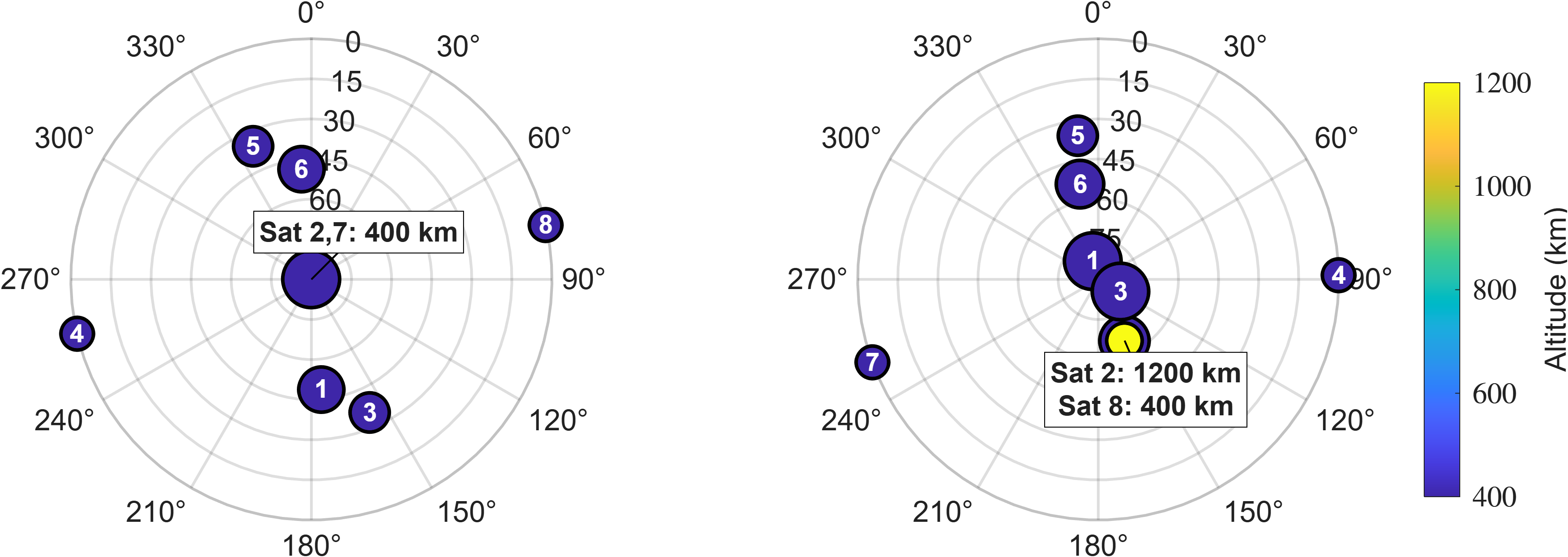}
\caption{Eight-state optimal satellite geometries: the separately optimized 400~km single-shell baseline (left) and the variable-altitude configuration tabulated in Table~\ref{tab:example_config} (right, seven satellites at 400~km and one at 1200~km). Satellite indices are assigned per configuration and do not correspond across panels. Radial distance from center represents zenith angle ($90^\circ - \varepsilon$); marker color encodes orbital altitude for the variable-altitude configuration. The single coupling-carrying satellite drawn to the high-altitude shell illustrates the decorrelation mechanism of Section~\ref{sec:altitude} for one representative case rather than as a general design outcome.}
\label{fig:example_geometry_varalt}
\vspace{-10pt}
\end{figure}

The clock-bias entries in Tables~\ref{tab:example_results} and~\ref{tab:example_config} and Figures~\ref{fig:cz_scatter} and~\ref{fig:example_geometry_varalt} exceed unity because they are expressed in the common normalization rather than each satellite's own: the per-satellite bound $|c|\le1$ of Lemma~\ref{lem:c_negative} is rescaled by $\eta^j/\eta_{\mathrm{ref}} \approx 2.2$ for a 400~km satellite, so its entry reaches about $2.2$ near zenith -- the same sensitivity amplification that drives the increase in $d$.

\vspace{-10pt}

\subsection{Verification of Theoretical Predictions}\label{subsec:example_verify}

The GDOP inflation formula in Eq.~\eqref{eq:gdop_inflation_final} is verified at the variable-altitude optimal geometry. Evaluating both $\mathrm{GDOP}_8$ and $\mathrm{GDOP}_7$ at the same satellite configuration yields $\mathrm{GDOP}_8^2 - \mathrm{GDOP}_7^2 = 5.05 - 3.56 = 1.49$, with $\mathrm{GDOP}_8 = 2.247$ and $\mathrm{GDOP}_7 = 1.886$ at the shared geometry. The Schur complement prediction gives $(\|\boldsymbol{\xi}\|^2 + 1)/S = (1.671 + 1)/1.793 = 1.49$, where $\boldsymbol{\xi} = G_7\mathbf{b}$ is the least-squares coefficient vector of Eq.~\eqref{eq:gdop_inflation_final} with $\|\boldsymbol{\xi}\|^2 = 1.671$; the two agree to within $10^{-14}$, confirming the identity. As an independent check that the value of $\mathrm{GDOP}_8$ itself is correct, and not merely algebraically self-consistent, the same figure is obtained by directly evaluating $\sqrt{\operatorname{Tr}[(A_{\mathrm{GDOP}}^T A_{\mathrm{GDOP}})^{-1}]}$ on the unpermuted eight-state Jacobian, which reproduces $\mathrm{GDOP}_8 = 2.247$. The inflation $\mathrm{GDOP}_8^2 - \mathrm{GDOP}_7^2$ falls from 6.27 at the reference shell to 2.70 at 400~km and 1.49 with diversity, the larger step again accompanying the altitude-level drop, which lifts the variance floor $1/d$ (Proposition~\ref{prop:clock_bias_bounds}) from 0.35 to 0.082 and leaves little room for inflation. As a check on the parameterization itself, the closed-form rows of Eq.~\eqref{eq:8state_row} were compared against Jacobians obtained by finite differencing the measurement model on Cartesian satellite states, with the clock-bias sensitivity obtained by perturbing the state-evaluation epoch, over a grid of elevations, azimuths, inclinations, pass directions, and altitudes, agreeing to a maximum normalized error of $1\times10^{-8}$.

\vspace{-10pt}

\section{Discussion and Conclusions}\label{sec:conclusions}

\vspace{-5pt}

This paper developed an analytical framework for understanding Doppler-based navigation geometry with LEO satellites. Starting from the closed-form parameterization of the eight-state Jacobian, we showed that the clock bias column is structurally correlated with both the position columns, through the shared LOS rate that sets the measurement sensitivity, and the vertical velocity column, through the shared elevation factor. The Schur complement decomposition quantified the resulting GDOP inflation exactly in terms of the collinearity coefficient and the total clock bias sensitivity, and established that the inflation is always strictly positive: the same geometric coupling that makes clock bias observable from Doppler measurements is the coupling that inflates estimation covariance. The size of that inflation is not fixed but set by geometry, which is what makes altitude diversity able to reduce it.

The central theoretical finding is the sensitivity--decorrelation tradeoff. Altitude diversity reduces the collinearity by separating satellites at equal elevation onto distinct sensitivity bands, and the satellite heading provides a weaker reduction within each band, but higher-altitude satellites carry weaker Doppler signatures. Both levers are modest. A single-altitude constellation that is favorable for Doppler GDOP is already strongly collinear, because the same high-sensitivity satellites that lower the dilution are the ones least separable from the clock-drift direction, and at the altitude separations available in current LEO constellations the relief that altitude diversity provides is limited: in the worked configuration most of the geometric improvement comes from flying the constellation lower, an altitude-level effect available to any low constellation, while the part attributable to altitude diversity itself is a small fraction of that total.

These results give a structural account of an effect observed empirically in prior work: that the volume of the LOS vectors is a poor predictor of Doppler GDOP while a determinant over the full Doppler Jacobian is a strong one. The analysis here identifies why, by locating the clock bias column outside the span that a LOS volume can represent and showing that its per-satellite sensitivity is sign-definite and bounded, and its coupling to the remaining states irreducible. The same analysis bounds how far geometry can act on that coupling: altitude diversity lowers the collinearity but the available altitude spread is small, and the satellite heading is weaker still, so neither offers more than partial relief. The same structure answers the question of whether a cheaper geometric proxy could stand in for the full Jacobian: a seven-state proxy is licensed only when the clock bias column is more orthogonal than parallel to the remaining states, a condition far from attainable in any configuration examined here, so the reliance of satellite selection on the full Doppler Jacobian is structural rather than incidental. The treatment is kept dimensionless throughout, because its object is the geometric mechanism that governs the relative quality of constellations rather than an absolute error budget, which depends on the measurement-noise model and the operational scenario.

\vspace{-10pt}

\section*{Acknowledgments}

This work was supported by the Peter Munk Research Institute (PMRI) at Technion and  the Israeli Smart Transportation Research Center.

\vspace{-10pt}

\section*{Conflict of Interest}
The authors declare no conflict of interest.

\vspace{-10pt}

\printbibliography[title=References]

\end{document}